\documentclass[preprint,3p,times]{elsarticle}

\usepackage[utf8]{inputenc}
\usepackage{amsthm}
\usepackage{amsmath,amssymb,amsfonts}
\usepackage{mathtools}
\usepackage{graphicx}
\usepackage{caption}
\usepackage{subcaption}

\usepackage[dvipsnames]{xcolor}

\usepackage{url}
\usepackage{enumitem}
\usepackage{listings}
\usepackage{float}
\usepackage[hidelinks]{hyperref}
\usepackage{mathpartir}
\usepackage[ruled,vlined,linesnumbered]{algorithm2e}
\usepackage{placeins}

\usepackage{tikz}
\usetikzlibrary{arrows.meta}

\usepackage{standalone}

\theoremstyle{definition}
\newtheorem{definition}{Definition}

\theoremstyle{plain}
\newtheorem{theorem}{Theorem}

\newtheorem{proposition}{Proposition}
\newtheorem{example}{Example}

\definecolor{dkgreen}{rgb}{0,0.6,0}
\definecolor{nicepurple}{rgb}{0.49,0.34,0.76}

\lstdefinelanguage{Hrebeca}{
    morekeywords={
        softwareclass, physicalclass, changemode, mode, inv, msgsrv, guard,
        statevars, int, float, real, knownrebecs, if, else, self, main, Wire, CAN,
        bool, reactiveclass, true, false, setmode, after, boolean
    },
    otherkeywords={=>,<-,<\%,<:,>:,\#,@},
    sensitive=true,
    morecomment=[l]{//},
    morecomment=[n]{/*}{*/},
    morestring=[b]",
    morestring=[b]',
    morestring=[b]""",
    literate={'}{'}1,
}

\usetikzlibrary{arrows,shapes,positioning,decorations.pathmorphing,shapes,calc}
\newcounter{sarrow}

\providecommand\longrightarrowRHD{\relbar\joinrel\relbar\joinrel\mathrel\RHD}
\providecommand\longrightarrowrhd{\relbar\joinrel\relbar\joinrel\mathrel\rhd}
\makeatletter
\providecommand*\xrightarrowRHD[2][]{\ext@arrow 0055{\arrowfill@\relbar\relbar\longrightarrowRHD}{#1}{#2}}
\providecommand*\xrightarrowrhd[2][]{\ext@arrow 0055{\arrowfill@\relbar\relbar\longrightarrowrhd}{#1}{#2}}
\makeatletter
\newcommand{\overto}[1]{\stackrel{#1}{%
		\overrightarrow{\smash{\,\scriptsize{\phantom{#1}}\,}}}}
\newcommand{\aoverto}[1]{\xrightarrowrhd{#1}}

\newcommand{\toverto}[1]{\stackrel{#1}{\approx\!>}}

\def\int{\mathbb{N}}

\def\VarDec{\mathit{VarDecl}}

\def\CTTS{\mathit{CTTS}}
\def\BCTTS{\mathit{BCTTS}}

\def\RDG{\mathit{RDG}}
\def\cont{\mathit{cont}}

\def\Slice{\mathit{Slice}}
\def\Type{\mathit{Type}}
\def\m{\mathfrak{m}}
\def\sid{\mathit{sid}}
\def\rid{\mathit{rid}}
\def\name{\mathit{name}}

\def\Dom{\mathit{Dom}}

\def\name{\it{vars}}
\def\include{\it{contains}}
\def\now{\it{now}}
\def\ID{\it{ID}}

\def\CName{\it{CName}}
\def\Msg{\it{Msg}}
\def\Var{\it{Var}}

\def\Stmt{\it{Stmt}}

\def\rcv{\it{rcv}}
\def\vars{\it{vars}}
\def\map{\it{map}}
\def\WTraces{\it{WeakTimedTr}}

\def\int{\mathbb{N}}
\def\true{\it{true}}
\def\false{\it{false}}
\def\arv{\it{arv}}
\def\int{\mathbb{N}}
\def\eval{\it{eval}}
\def\expr{\it{expr}}
\def\conf{\it{Config}}
\def\Ins{\it{Ins}}
\def\node{\it{node}}

\def\self{\it{self}}

\def\upd{\it{Upd}}

\def\Valuation{\Sigma}
\def\valuation{\sigma}
\def\Name{\mathit{MName}}
\def\bag{\mathfrak{b}}

\def\body{\mathit{body}}

\journal{Adaptive Programming}

\begin{document}

\begin{frontmatter}


\title{Behavioral Analysis of Timed Actors using Syntactic Slice Equivalence} 

\author[ut]{Ali Ataollahi}
\author[ut]{Fatemeh Ghassemi}
\author[itu,uio]{Eduard Kamburjan}
\author[mdu,kth]{Marjan Sirjani}

\affiliation[ut]{
    organization={University of Tehran},
    city={Tehran},
    country={Iran}
}

\affiliation[itu]{
    organization={IT University of Copenhagen},
    city={Copenhagen},
    country={Denmark}
}

\affiliation[uio]{
    organization={University of Oslo},
    city={Oslo},
    country={Norway}
}

\affiliation[mdu]{
    organization={Mälardalen University},
    city={Västerås},
    country={Sweden}
}
\affiliation[kth]{
    organization={KTH},
    city={Stockholm},
    country={Sweden}
}
\begin{abstract}
Tiny twins are compact behavioral models derived from timed actor models for
selected observable messages. When a source model evolves, regenerating its
tiny twin requires state-space exploration and reduction even if the relevant
behavior is unchanged. We present a static analysis for Timed Rebeca that
compares backward slices of Rebeca dependence graphs for a given set of
observable message names. For the Zeno-free fragment with \texttt{after}
annotations and no \texttt{delay} statements, we prove that slice equivalence
implies weak timed bisimulation under the selected observations. This preserves
observable actions and total elapsed time across internal transitions, allowing
the existing tiny twin to be reused. 
We evaluate the implementation on ten benchmark models paired with revisions
that preserve their observable slices. The cost of static comparison depends on
the size of the graphs representing source statements and their dependencies,
while tiny-twin generation depends on the number of reachable states and
transitions. This difference is reflected in the measurements: static comparison
takes less than one second using tens of megabytes of memory, while tiny-twin
generation can take over an hour and use hundreds of gigabytes.
\end{abstract}

\end{frontmatter}

\section{Introduction}\label{sec:introduction}
Programs evolve as requirements change, faults are corrected, and components
are replaced. A change need not affect every use of a program or its models.
This matters when a general model supports several tasks, each concerned with
only part of its behavior. Models developed for early verification and
validation can serve such later uses~\cite{DBLP:journals/tosem/CederbladhCS24}.
When the source model changes, we therefore need to determine which specialized
models remain valid and which must be regenerated.

A \emph{tiny twin} is a compact behavioral model obtained by reducing a timed
actor model's state space with respect to selected observable messages.
Tiny twins have been used to detect cyber-attacks at
runtime~\cite{DBLP:journals/jpdc/MoradiPAS24}, but the underlying reduction is
useful whenever only a particular observational context matters. Generating a
tiny twin requires state-space exploration and reduction, even when a source
change leaves that context unaffected. A static check that establishes this
fact can avoid repeating these computations.

We present such a check for Timed
Rebeca~\cite{DBLP:journals/scp/ReynissonSACJIS14}. Given an original model
$\mathcal{M}$, a revision $\mathcal{M}'$, and a set $O$ of observable message
names, we construct their Rebeca dependence graphs (RDGs) and compare the
backward slices induced by $O$. These slices represent source statements and
their dependencies; tiny twins represent behavior. The analysis must account
for asynchronous communication~\cite{SabouriSirjani2010Actor} and the timing
dependencies that can affect an observation~\cite{sliced}. We consider the
Zeno-free fragment with \texttt{after} annotations and no \texttt{delay}
statements: an execution cannot perform infinitely many actions within bounded
time. For this fragment, we prove that equivalent observable RDG slices imply
weak timed bisimulation under $O$, preserving observable actions and total
elapsed time across internal transitions. Acceptance therefore permits reuse
of the existing tiny twin. Rejection is inconclusive: different slices may
still describe the same observable behavior.

Our contributions are (1) a static slice-equivalence criterion for establishing that
a revision of a Timed Rebeca model preserves the behavior observable through
$O$; (2) a proof of its soundness, showing that acceptance guarantees weak
timed bisimulation and permits reuse of the existing tiny twin; and (3) an
evaluation on ten benchmark models comparing the cost of
the static decision with tiny-twin regeneration through state-space
exploration and reduction.
Section~\ref{sec:preliminaries} reviews the behavioral equivalence relations we exploited in our behavioral analysis and the classic program dependence graph. We introduces the timed Rebeca syntax, semantics, and its tiny twins in Section \ref{sec:timedRebeca}.
Section~\ref{sec:rdg} defines the analysis and establishes soundness.
Section~\ref{sec:case} presents the smart-home case study.
Section~\ref{sec:experiments} presents the tools and experiments,
Section~\ref{sec:discussion} examines how the comparison distinguishes causality
loops under evolution, and Section~\ref{sec:related} discusses related work.

\section{Preliminaries}\label{sec:preliminaries}
We explain basic notions of behavioral equivalences that are used in our tiny twin definition. We also review the classic program dependence graph.

\subsection{Equivalence Relation}\label{subsec::equiv}
We use weak timed bisimulation~\cite{BrengosPeressotti2019}, the timed extension of weak bisimulation~\cite{Milner1989}, to compare two BCTTSs that accumulate time over a path while ignoring the internal actions.
%
Let $\Rightarrow$ be the reflexive and transitive closure of $\tau$-transitions. We define $\xRightarrow{a}=\Rightarrow\overto{a}\Rightarrow$, and $\toverto{t}=\Rightarrow\overto{t_1}\Rightarrow\overto{t_2}\Rightarrow\ldots\overto{t_n}\Rightarrow$ where $t=t_1+t_2+\ldots+t_n$. 

\begin{definition}[Weak Timed Bisimulation]
\label{def:twt-equivalence}
Let $(S,\rightarrow,s_0)$ be a timed transition system. 
A binary relation $\mathcal{R}\subseteq S\times S$ is a weak timed simulation if $s\,\mathcal{R}\, u$ implies: 
\begin{enumerate}
 \item if $s\xRightarrow{a}s'$, then there exists a state $u'$ such that $u\xRightarrow{a}u'$ and $s'\,\mathcal{R}\, u'$;
 \item if $s\toverto{t}s'$, then there exists a state $u'$ such that $u\toverto{t}u'$ and $s'\,\mathcal{R}\, u'$.
\end{enumerate}
$\mathcal{R}$ is a weak timed bisimulation if $\mathcal{R}$ and $\mathcal{R}^{-1}$ are weak timed simulations. Two states $s$ and $u$ are weakly timed bisimilar, denoted by $s\simeq_{wtb}u$, if $s\,\mathcal{R}\, u$ for some weak timed bisimulation relation $\mathcal{R}$. 
\end{definition}

Two timed transition systems $\mathcal{T}_1=(S_1,\rightarrow_1,{s_0}_1)$ and $\mathcal{T}_2=(S_2,\rightarrow_2,{s_0}_2)$ are  weak timed bisimilar, denoted by
$\mathcal{T}_1\simeq_{\mathrm{wtb}}\mathcal{T}_2$, if there is a weak timed bisimulation relation $\mathcal{R}$ such that ${s_0}_1 \,\mathcal{R}\,{s_0}_2$. 

Weak timed trace equivalence~\cite{BrengosPeressotti2019} compares states by the sequences of observable actions and elapsed times they can exhibit: internal actions are ignored, and consecutive time delays, possibly separated by internal actions, are summed.

\begin{definition}[Weak Timed Trace Equivalence]\label{Def::weaktrace}
Given $\mathcal{T}=(S,\rightarrow,{s_0})$, for each state $s\in S$, $\WTraces(s)\in (\Name^* \times \int)^*\cup (\int \times \Name^* )^*$ is a set defined inductively :\begin{itemize}
\item $\epsilon\in\WTraces(s)$ ;
\item $\rho \in\WTraces(s)$, if there is $s'\in S$ such that $s\overto{\tau}s'$ and $\rho\in\WTraces(s')$;
\item $a\frown \rho \in\WTraces(s)$, if there is $s'\in S$ such that $s\overto{a}s'$, where $a\in\Name$, and $\rho\in\WTraces(s')$;
\item $(t_1+t_2)\frown \rho \in\WTraces(s)$, if there is $s'\in S$ such that $s\overto{t_1}s'$, where $t_1\in\int$, and $t_2\frown\rho\in\WTraces(s')$;
\end{itemize}
Two states $s_1$ and $s_2$ are called weakly timed equivalent if and only if their traces are weakly timed trace equivalent:
\[
s_1\simeq_{wtt} s_2 \text{ iff. }\WTraces(s_1)=\WTraces(s_2)
\]
\end{definition}

\subsection{Program Dependence Graph}\label{subsec::PDG}
Various definitions of program dependence representations have been introduced in the literature as an intermediate representation of a program for the intended static analysis. The common feature among them is the explicit representation of data and control dependence \cite{FerranteOW87}.

The \emph{program dependence graph} for a program $P$, denoted by $G_P=(V,E_C\cup E_D)$, is a directed graph whose vertices are connected by several kinds of edges. Restricting a program to the statements of Timed Rebeca, the vertices represent the assignments, send statements, and control predicates (conditional statement) that occur in the program. In addition, $G_P$ includes an \emph{entry} and \emph{exit} vertices. 

The edges of $G_P$ represent either a \emph{control} ($E_C$) or \emph{data} ($E_D$) dependence between program components. Control dependence edges are labeled either $\it true$ or $\it false$. The source of a control dependence edge is always the entry or a predicate vertex. A control dependence edge $(v_1,b,v_2)\in E_C$, denoted by $v_1\overto{b}_c v_2$ means that during execution, whenever the predicate represented by $v_1$ is evaluated to the value matching the label $b$, then the program component represented by $v_2$ will be executed. Restricting a program to the statements of Timed Rebeca, the control dependence edges reflect a program's nesting structure; $(v_1,b,v_2)\in E_C$ iff one of the following holds:
\begin{itemize}

\item $v_1$ is the entry vertex, and $v_2$ represents a component of $P$ that is not subordinate to any control predicate. Such edges are labeled $\it true$;
\item $v_1$ is a control predicate of a conditional statement, and $v_2$ represents a component of $P$ immediately subordinate to the control construct whose predicate is represented by $v_1$. The edge is labeled by $\it true$ or $\it false$ depending that $v_2$ occurs in the then or else branch, respectively.
\end{itemize}
A data dependence edge $(v_1,v2)\in E_D$ means that the program's computation might be changed if the relative order of the components represented by $v_1$ and $v_2$ were reversed. $(v_1,b,v_2)\in E_D$ iff one of the following holds:
\begin{itemize}
\item $v_1$ defines variable $x$, and $v_2$ uses  $x$, while the control can reach from $v_1$ to $v_2$ via an execution path along which there is no intervening definition of $x$;
\item $v_1$ and $v_2$ both defines the same variable $x$, and are in the same branch of any conditional statement that encloses both of them their relative order of execution is from $v_1$ to $v_2$. Furthermore, there exists a program component $v_3$ using the variable $x$ such that $(v_1,v_3)\in E_D$ and $(v_2,v_3)\in E_D$.
\end{itemize}

The relationship between a program's PDG and the program's execution behaviour has been investigated in \cite{HorwitzPR88a}, as stated in the following theorem.
\begin{theorem}[Strong Equivalence]\label{The::StrongEquvalence}
If the PDGs of two programs $P$ and $Q$ be isomorphic, then for any state $\valuation$ both diverge when initiated on $\valuation$ or they both halt with the same final values for all variables.
\end{theorem}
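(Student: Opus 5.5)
The plan is to follow the classical argument of Horwitz, Prins and Reps, specialized to the statements available here: assignments, send statements, and conditionals, with no loops inside a statement sequence. Fix an initial valuation $\valuation$ and let $\pi$ be the isomorphism between $G_P$ and $G_Q$. The goal is to show that every vertex $v$ of $G_P$ and its image $\pi(v)$ in $G_Q$ are executed the same number of times. Moreover, the $k$-th execution of $v$ and the $k$-th execution of $\pi(v)$ should see identical values for all variables they use, and therefore compute identical values (or identical predicate outcomes). The final-value claim then follows by looking, for each variable $x$, at the last definition of $x$ executed in each program. These definitions correspond under $\pi$, because of the def-order edges of the second kind in $E_D$, and they assign the same value. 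A variable that is never defined in either program simply keeps its value from $\valuation$.

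First I would establish that the two programs reach the same set of vertices. I would argue by induction on the depth of a vertex in the control dependence tree, using the fact that $E_C$ reflects the nesting structure. A vertex is executed iff its control parent is executed and evaluates to the matching label $b$. Since $\pi$ preserves control edges with their labels, it suffices that corresponding predicates evaluate identically. That is the data claim applied to the predicate vertex, so the two inductions have to be interleaved.

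The data claim is where I expect the real work. I would proceed by induction on the position of a vertex in the execution order of $P$. Suppose $v$ uses $x$. In $P$, the value of $x$ at $v$ comes from the unique last executed definition $d$ of $x$ preceding $v$, or from $\valuation$ if there is none. Because that path has no intervening definition, there is an edge $d\to v$ in $E_D$, so $\pi(d)\to\pi(v)$ is an edge in $G_Q$. The main obstacle is to show that $\pi(d)$ is in fact the \emph{last} executed definition of $x$ reaching $\pi(v)$ in $Q$. A different reaching definition $d'$ with $\pi(d')\to\pi(v)$ might instead be the one that is live in $Q$. I would rule this out using the def-order edges. Both $d$ and $d'$ reach $v$, so they are ordered by an $E_D$ edge of the second kind. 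Since $\pi$ preserves this edge, their relative order in $Q$ matches that in $P$. Both are executed in $Q$ by the reachability step and the induction hypothesis, so the later one in $P$ is also the later one in $Q$. By the induction hypothesis, $\pi(d)$ then assigns the same value.

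Finally, for the divergence clause: this restricted statement language has no loops, so each program halts on every $\valuation$, and the "both diverge" alternative is vacuous. If the paper intends message handlers with possible recursion through sends, I would instead conclude the divergence part from equal execution counts. Under this plan, an infinite execution of one program would force infinitely many executions of corresponding vertices in the other.
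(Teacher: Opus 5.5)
The paper gives no proof of this statement; it cites Horwitz, Prins and Reps. Your induction in program order reconstructs the standard argument. It interleaves reachability (via labelled control edges) with agreement of used values (via the last reaching definition, pinned down in $Q$ by a def-order edge), and it is sound for values read at PDG vertices. One small correction: the def-order edge between $d$ and $d'$ exists because both are executed in one loop-free run, so they lie in the same branch of every enclosing conditional. It does not follow merely from both reaching $v$.

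The gap is the final-value step. A def-order edge between two definitions of $x$ exists only when both have flow edges into a common vertex that uses $x$. If $x$ is not used after its last definition, nothing in the graph relates the last definitions in $P$ and $Q$.

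Take $P$ = \texttt{x=1; x=2;} and $Q$ = \texttt{x=2; x=1;}. Under the definition in Section~\ref{subsec::PDG}, both PDGs consist of the entry and exit vertices and the two assignments, each a \emph{true}-child of the entry. There are no data edges at all, since nothing uses $x$ and so there is neither a flow nor a def-order edge. The graphs are therefore isomorphic with contents preserved. Yet $x$ ends as $2$ in $P$ and $1$ in $Q$.

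So your claim that the last definitions ``correspond under $\pi$ because of the def-order edges'' cannot be established, and the statement itself is false for the graph as defined. The missing ingredient is built into the PDGs of Horwitz, Prins and Reps: a final-use vertex for each variable whose final value is asserted. Equivalently, the exit vertex must be read as using every variable. Then the last executed definition of $x$ has a flow edge into that vertex, the final-value claim is your data claim applied to it, and the rest of your argument, including the vacuous divergence clause, goes through.
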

The isomorphism is not merely structural graph isomorphism graph. The corresponding nodes must represent equivalent operations/statements, and the relevant semantic information has to be preserved.

Treating the body of messages servers of Timed Rebeca models as a simple program, their PDGs can be derived similarly. As there is no loop statements in Timed Rebeca, message servers with isomorphic PDGs, their executions within two rebeca models result in the same final values for all variables. We remark that each send statement can be conceived as a set of assignment; assignment to the actual message parameters, and assignment the message to the bag of receiver. 

\begin{example}
Consider the following programs that their resulting PDGs are isomorphic:
\[\begin{array}{ll}
{\it prog}_1: ~~x=4;~y=y+3;~z=x;~ \mathsf{if}~(x+y>4)~\{a=z;\}\\
{\it prog}_2:~~ y=y+3;~x=4;~z=x;~ \mathsf{if}~(x+y>4)~\{a=z;\}\\
{\it mprogsg}_3:~~ x=4;~z=x;~y=y+3;~ \mathsf{if}~(x+y>4)~\{a=z;\}
\end{array}\]There is a data-dependence edge from the node $x=4$ to $z=x$, and from the nodes $x=4$ and $y=y+3$ to the conditional statement, i.e., the predicate $x+y>4$. 
\end{example}
\section{Timed Rebeca and Tiny Twins}\label{sec:timedRebeca}
We use Timed Rebeca~\cite{DBLP:journals/scp/ReynissonSACJIS14} as the language describing timed actors. We explain its syntax, and define the coarse semantics. 
At its core, an actor system consists of a number of actors~\cite{gul}, which communicate with each other solely using asynchronous messages. Each actor encapsulates its internal state fully, i.e., no other actor can access its state directly, and handles messages using preemption-free handlers: Once a message is received by an actor, it is put into the actor's queue and later handled by a message handler. Each message handler runs without interruption from other handlers until it eventually terminates.

The time model is that there is a global clock, and each actor can \emph{delay} the delivery of a sent message by a given amount of time units. The global clock is advanced whenever no handler is running, and at least one message is delaying its delivery. The global clock is advanced by the minimal time needed to deliver the message. Note that internal delays can be encoded using delay of message delivery.
In the remainder of this paper, we consider the \emph{Zeno-free} fragment, where there is are infinitely many messages in finite, strictly positive time.

\subsection{Syntax}
Rebeca \cite{DBLP:journals/fuin/SirjaniMSB04} and its extensions use an object-based approach with Java-like syntax to express actor systems.
Semantics of Rebeca is  close to original actor model and Rebeca is the first actor-based language with a model checking support~\cite{DBLP:journals/csur/BoerSHHRDJSKFY17,DBLP:conf/fsen/KhamespanahSK23}. The main feature is that sending messages are mapped to asynchronous method calls and message handlers to methods , and delays for delivering messages is annotated at call site of method calls using an \texttt{after} construct. 

\begin{figure}[h]
\begin{center}
\fbox{\parbox{0.75\columnwidth}{\vspace{-2mm}
{\footnotesize
    \begin{align*}
    \mathrm{Model} &\Coloneqq~\langle{Class}\rangle^+~\mathrm{Main}~\\
    \mathrm{Main} &\Coloneqq ~\mathsf{main}~ \{\mathrm{RebecInstance}^*\} \\
    \mathrm{RebecInstance} &\Coloneqq ~\mathrm{C}~ \mathrm{r}~(\mathrm{\langle  r\rangle}^* )~\colon(\mathrm{\langle c\rangle}^*)\\
    \mathrm{Class} &\Coloneqq~ \mathsf{reactiveclass} ~\mathrm{C}~ \{\mathsf{KnownRebecs}~ \mathsf{Vars}~ \mathsf{MsgSrv}^* \}\\
    \mathrm{KnowRebecs} &\Coloneqq~ \mathsf{knownrebecs}~\{\mathrm{VarDcl} \}\\
    \mathrm{Vars} &\Coloneqq \mathsf{statevars}~ \{\mathrm{VarDcl} \}\\
    \mathrm{VarDcl} &\Coloneqq \langle\mathrm{T}~ \mathrm{v}\rangle^* ;\\
    \mathrm{MesgSrv} &\Coloneqq \mathsf{msgsrv} ~\mathrm{m}~(\mathrm{VarDcl})~\{\mathrm{Stmt}^* \}\\
    \mathrm{Stmt}
    &\Coloneqq~
    \mathrm{Assign};~|~\mathrm{Send};~|~\mathsf{if}(\mathrm{Expr})~\mathrm{MSt}~[\mathsf{else}~\mathrm{MSt}]~|~ {\color{blue}\mathsf{delay}(\mathrm{Expr});}\\
    \mathrm{Assign} & \Coloneqq~\mathrm{v}\mathsf{=}\mathrm{Expr}~|~ ~\mathrm{v}\mathsf{=?(}\mathrm{Expr}\langle,\mathrm{Expr}\rangle^+\mathsf{)}\\
    \mathrm{Send}
    &\Coloneqq~\mathrm{r.m(\langle Expr\rangle ^*)}
    [\mathsf{after}~ \mathrm{Expr}]\\
    \mathrm{MSt}
    &\Coloneqq~\{\mathrm{Stmt}^*\}~|~ \mathrm{Stmt} 
    \end{align*}
}}}
\end{center}
\vspace{-2mm} 
\caption{
Timed Rebeca Syntax: 
$\langle~\rangle$ denotes meta-parentheses, 
$[~]$ indicates optional syntax, and $\langle~\rangle^\ast$ denotes comma-separated lists; identifiers \texttt{C}, \texttt{T}, \texttt{m}, 
and \texttt{r} represent class, type, method, 
and rebec names, respectively.}
\label{fig:timedSyntax}
\end{figure}

The syntax of Timed Rebeca is given in Fig.~\ref{fig:timedSyntax}. Each Timed Rebeca model consists of a set of reactive classes and a \texttt{main} block. A reactive class specifies the
type of its runtime instances, called \emph{rebecs}; it declares state variables,
known rebecs, a constructor, and message servers. Known rebecs specify the set of rebecs to which instances of the class can send messages. State variables are local to a rebec. Message servers, or methods, declared with the keyword \texttt{msgsrv}, specify the behavior of a rebec when handling incoming messages using variable assignment, nondeterministic assignment, conditional statement, delay statement, and message sending. The \texttt{main} block creates the rebecs in the model and supplies their known rebecs and constructor
arguments.  

Communication between rebecs is therefore exclusively by asynchronous, non-blocking message passing to known rebecs. Each rebec owns a message bag rather than a FIFO queue. When it is idle, it
takes a message from its bag and executes the corresponding message
server. A rebec is single-threaded, so two of its message servers cannot execute simultaneously; message-server steps of different rebecs may nevertheless be
interleaved. A rebec may send messages to itself to express recurrent behavior. 

The send statement \texttt{r.m()~after~t} denote that the message is delivered to \texttt{r} after \texttt{t} times. 
The construct \texttt{after~t} proceeding a send statement models the communication delay. The \texttt{delay(t)} statement is useful for modeling the computation delay, which resumes the execution of the rebec after $t$ times. We consider the fragment of Timed Rebeca without the \emph{delay} statement which simplifies the representation of time progress in the semantic model. Therefore, each message server is executed without being paused and resumed later. So, each message server is executed at its message's arrival time. 
From now on, Timed Rebeca refers to this fragment. 

Nondeterministic assignments are part of the Timed Rebeca fragment considered in this paper and are used to model choices in the
environment. A nondeterministic assignment has the form $x := ?(e_1,\ldots,e_n)$ leading to a nondeterministic behavior.  


\subsection{Operational Semantics}
The semantics of Timed Rebeca are expressed in terms of a Timed Transition System (TTS).
%
We introduce a few notations and auxiliary functions that are used in describing the formal model and semantics of Timed Rebeca.

\paragraph{Notations}
Given a set $A$, we use $A^{*}$ to show a set of all finite sequences on the elements of $A$. For a sequence $a\in A^{*}$ with the length of $n$, the symbol $a_i$ denotes the $i^{th}$ element of the sequence, where $1 \leq i \leq n$. We write $x\in a$ to denote that there exists $i$ such that $a_i=x$. We use $a\setminus x$ to show a sequence achieved from removing an occurrence of $x$ from the sequence. An empty sequence is denoted by $\epsilon$. We write $a\frown x$ to express a sequence achieved from appending $x$ to the end of a sequence. 
For a function $f:X \rightarrow Y$, we define $f[\alpha \mapsto \beta]:X\cup\{\alpha\}\rightarrow Y\cup\{\beta\}$ to be the function with $f[\alpha \mapsto \beta](x)=f(x)$ for all $x\in X\setminus\{\alpha\}$ and $f[\alpha \mapsto \beta](\alpha)=\beta$. We use $\Dom(f)$ to denote the domain of the function $f$.

\paragraph{Basic Definitions}
Let $\ID$ be the set of \emph{rebec identifiers}, $\CName$ the set of class names, $\Name$ the set of message server names. 
Given the set $\Var$ of variables with values from $\textit{Val}$ (including boolean, $\int$ values), a \emph{valuation} $\valuation: \Var \rightarrow \textit{Val}$ is a function that maps a value to each variable from its domain $\Dom(\valuation)=\Var$. The set of all valuations is denoted by $\Valuation$. Let $\valuation \downharpoonright
v$ denote the valuation whose domain is restricted to the variables of $v$, i.e., $\Dom(\valuation\downharpoonright v)=\Dom(\valuation)\cap v$. For a valuations $\valuation_1: \Var_1 \rightarrow \textit{Val}$ and $\valuation_2: \Var_2 \rightarrow \textit{Val}$ over disjoint domains $\Var_1\cap\Var_2=\emptyset$, we define their \emph{union} $\valuation_1\uplus \valuation_2$ to assign $\valuation_1(x)$ to each $x\in\Var_1$, and $\valuation_2(x)$ to each $x\in\Var_2$. Two valuations $\valuation_1$ and $\valuation_2$ are called equal if $\Dom(\valuation_1)=\Dom(\valuation_2)$ and $\forall v\in\Var\cdot \valuation_1(v)=\valuation_2(v)$. 
Let $\VarDec=\Var\times\Type$ denote a variable declaration, where $\Type$ is the set of  primitive types including boolean and int. We use $\name(\nu)$ to denote the variable of the declaration $\nu$.

A \emph{message} $\m$ is a tuple $\m = (\sid,\rid,m,par,\arv)$, where $\sid\in \ID$ is the sendeing rebec identifier, $\it rid$ is the receiving rebec identifier, 
$m\in\Name$ is the name of the message and its handling message server, $par\in \Sigma$ is the parameter valuation defining the values of the parameters, and $\arv\in \int$ is its arrival time. 
We use the notation $\m.\sid$, $m.\rcv$, and $\m.\arv$ to denote the sender, receiver, and the arrival time of a message. We also use $\m.\name$ to represent the name of a message.
Let $\Msg$ denote the set of all messages. 
Each rebec has a bag in which it stores all the messages that it has received. We represent the bag content as the sequence of messages in it, i.e., $\bag=\langle \textit{msg}_1, \textit{msg}_2, \dots , \textit{msg}_n \rangle \in \Msg^*$ for some $n \in \int$.  Let $O\subseteq \Name$ be a set of names. We use $\bag \downharpoonright O$to denote a bag containing those messages of the bag whose names belong to $O$ and $|\bag |$ indicate the size of the bag.   

A \emph{Timed Rebeca Model} is a concise representation of the state of a Timed Rebeca program, including its class table $C$, as well as the set of rebecs $\Ins$ and the current state $\conf$. Note that $\Ins$ is constant.

\begin{definition}[Timed Rebeca Model]
\label{def:timed-rebeca-model}
A Timed Rebeca model $\mathcal{M}$ is defined by the triple $(C,\Ins,\conf)$, where the components are as follows.
\begin{itemize}
    \item $C=(C_i)_{i \in I}$ is the set of reactive class declarations for some index set $I$, Each reactive class is described by $C_i=(c_i,V_i,M_i,K_i)$, where $c_i\in\CName$ is the class name, $V_i\subseteq \VarDec$ is the set of state variable declarations, $M_i\subseteq \Name\times \VarDec\times\Stmt^*$ is its set of message
servers declarations defined by a triple like $(m,\nu,st)$, and $K_i\subseteq \Var\times C$ is the set of known rebec declarations, 
\item 
    $\Ins\subseteq \ID\times \CName$ is the set of rebec instances, and
    \item $\conf\subseteq \ID\times (\Var\rightarrow \ID)
$ is the configuration expressing the set of known rebecs for each rebec. 
\end{itemize}  
\end{definition}

We use $\body(x, m)$ 
to denote the body 
of the message server with the name $m$ of the rebec with the identifier $x$. 
We write $\vars(x)$ to express the state variables of a rebec with the identifier $x$:
\[
\begin{array}{l}
(x,c_i)\in \Ins,\wedge\, C_i=(c_i,V_i,M_i,K_i) \Rightarrow \vars(x)=\{v_i\,\mid (v_i,T_i)\in V_i\}\\
(x,c_i)\in \Ins,\wedge\, C_i=(c_i,V_i,M_i,K_i) \,\wedge \, (m,\nu,st)\in M_i\Rightarrow \body(x,m)=st 
\end{array}
\]

We define the semantic model of a Timed Rebeca model in terms of timed transition systems. Each global state is defined by the local states of rebecs and the logical global time. The global state changes when a rebec handles one massage from its bag. As rebecs do not have any shared variables, instead of interleaving the executions of message server's statements of two rebecs, we coarsely handle message servers. 

\begin{definition}[Coarse Timed Transition System]
\label{def:rtts}
The Coarse Timed Transition System of a Timed Rebeca model $\mathcal{M}$ is denoted by $\operatorname{CTTS}(\mathcal{M})=(S,\rightarrow,s_0)$, where \begin{itemize}
\item $S\subseteq (\ID\rightarrow \Valuation\times \Msg^* \times \Stmt^* )\times \int$ is the set of global states. Each global state is a pair of a mapping from the rebec identifiers to their local states, and the global time. We use $\now(s)$ to indicate the global time of a state, and $\map(s)$ to express the mapping of a state. The local state of a rebec is represented by the triple $(\valuation,\bag,st)$, where $\valuation$ assigns values to the state variables, $\bag$ is its message bag, and $st$ is the sequence of remaining statement that the rebec is going to execute.
\item $\rightarrow\subseteq S\times (\Name\cup\int) \times S$ is the transition relation derived by the SOS rules in Tab. \ref{Tab:sos-rules}, and
\item $s_0$ is the initial state with logical time zero. Each rebec local variables are initialized to the default values (zero for int and false for boolean) and their bags contain the messages triggered by executing the constructors. 
\end{itemize}
\end{definition}


Intuitively, there are two kinds of transitions in CTTS~\cite{KhamespanahSVK16}: \begin{enumerate}
  \item A \emph{taking-event} action removes a message name $m$ from the bag of an
  idle rebec when $m.\arv=\mathit{now}(s)$, and executes the
  corresponding message server, as defined by the rule $\it TakeMessage$ in Tab. \ref{Tab:sos-rules}. We coarsely execute the body of the message server as indicated by $(\valuation\uplus par,\body(x,m),\epsilon),t\aoverto{}^*(\valuation',\epsilon,out),t$ which execute the body of the message server using the rebec valuation extended with the message parameter valuation. The messages sent during the message server to the other rebecs are gathered in $out$. We dispatch messages in $out$ to their receivers by using $\upd$ as defined below: 
  \[\begin{array}{l} 
\upd(s,\epsilon) =s \\
\upd(s,m_1 m_2 \ldots m_n) = \upd(s[x\mapsto(\valuation,\bag\frown m_1,st)],m_2 \ldots m_n)~~\text{where $m_1.\rcv=x$ and $s(x)=(\valuation,\bag,st)$}
\end{array}\]


  \item A \emph{time} action occurs only when no message can be taken. 
  It advances the clock to the earliest arrival
  time of a pending message, as expressed by the rule $\it TimeProg$ in Tab. \ref{Tab:sos-rules}. 
\end{enumerate}

\begin{table}[h]
    \centering
        \caption{Operational rules of the Timed Rebeca fragment; The rule names appended with $*$ denote the transitions of rebec local states: the  counterpart $\textsc{CondF}*$ for the conditional statement has been removed for brevity. 
        }
    \label{Tab:sos-rules}
    \begin{tabular}{|c|}
    \hline
$\inferrule*[left=(Assign*)]{\ }{\Big(\valuation,v:=\expr;st,out\Big),t~\aoverto{\tau}~\Big(\valuation\big[v\mapsto\eval(\expr,\valuation)\big],st,out\Big),t}$ 
\\[2mm]
$\inferrule*[left=(nonDeteAssign*)]{
0<i\le n}{\Big(\valuation,v:=(\expr_1,\ldots,\expr_n);st,out\big), t~\aoverto{\tau}~\Big(\valuation\big[v\mapsto\eval(\expr_i,\valuation)\big],st,out\Big), t}$\\[2mm]
$\inferrule*[left=(CondT*)]{\eval(\expr,\valuation)=\true}{\Big(\valuation,{\it if}~ (\expr)~st_1~{\it else} ~st_2;st,out\Big),t~\aoverto{\tau}~\Big(\valuation,st_1;st,out\Big),t}$\\[2mm]
$\inferrule*[left=(CondF*)]{\eval(\expr,\valuation)=\false}{\Big(\valuation,{\it if}~ (\expr)~st_1~{\it else} ~st_2;st,out\Big),t~\aoverto{\tau}~\Big(\valuation,st_2;st,out\Big),t}$\\[2mm]
$\inferrule*[left = (Send*)]{\ }{\Big(\valuation,y.m(\expr_1)~\mathtt{after}~(\expr_2);st,out\Big),t~\aoverto{\tau}~\Big(\valuation,st,out \frown \big(\self,y,m,\eval(\expr_1,\valuation), t+\eval(\expr_2,\valuation)\big)\Big),t}$\\[3mm]
$\inferrule*[left=(TakeMessage)]{\mathfrak{s}(x)=(\valuation,\bag,\epsilon)\\ (y,x,m,par,t)\in\bag\\ \big(\valuation\uplus par,\body(x,m),\epsilon\big),t~\aoverto{}^*~\big(\valuation',\epsilon,out\big),t}{\Big(\mathfrak{s},t\Big)~\overto{m}~\Big(\upd\big(\mathfrak{s}[x\mapsto (\valuation'\downharpoonright \vars(x),\bag\setminus m,\epsilon)],out\big),t\Big)}$  \\[2mm]
$\inferrule*[left=(TimeProg)]{\forall x\in\ID\cdot \mathfrak{s}(x)=(\valuation,\bag,\epsilon)\wedge \forall \m\in\bag\cdot\m.\arv>t\\ tp={\it min}(\{\m.\arv\,\mid\,\exists x\in\ID\cdot \mathfrak{s}(x)=(\valuation,\bag,\epsilon)\wedge  \m\in \bag \})}{\big(\mathfrak{s},t\big)\overto{tp-t}\big(\mathfrak{s},tp\big)}$\cr
\hline
\end{tabular}

\end{table}

The rules $\textsc{Assign}$, $\textsc{NonDeteAssign}$, and $\textsc{CondT}$ define how the local state of a rebec is changed upon the execution of assignment, nondeterministic assignment, and conditional statements, respectively. The rule $\textsc{Send}$ indicates that upon sending a message, its arrival time is computed and inserted into $out$ while the rule $\textsc{TakeMessage}$ will insert it into the recipient's bag. The message is taken from a bag when its arrival time is equal to the global time. Messages arriving at the same time to a rebec are handled nondeterministically in different orders (but at the same logical time).

The absolute logical time can make the CTTS infinite even when a reactive model
revisits the same behaviour. Afra therefore applies shift equivalence which
identifies states that have identical non-timing information and whose timing
components differ by a common offset. This yields a bounded CTTS (BCTTS) which is equivalent with the semantic model called CRTTS in the work of Moradi et al.~\cite{DBLP:journals/jpdc/MoradiPAS24}. We use $\BCTTS(\mathcal{M})$ to express the bounded CTTS derived from the timed rebeca model $\mathcal{M}$. 

We generalize the shift-equivalent given in \cite{DBLP:conf/agere/KhamespanahSKSI12,KhamespanahSSKI15} by introducing two parameters $M\subseteq\Name$ and $V\subseteq \Var$. In this parametrized version, only values of variables in $V$ are compared, and only the  messages of bags with a name from $M$ are considered. 
\begin{definition}[$M$-$V$ shift-equivalence relation]\label{Def::shift}
Two states $s$ and $s'$ are called $M$-$V$\emph{shift-equivalent}, denoted by $s\simeq_{V}^{M}s'$, if there exists $\delta\in\int$ such that (1) $\now(s)=\now(s')+\delta$, and (2) for all the rebecs with the identifier $x\in\ID$ where $\map(s)(x)=(\valuation,\bag,st)$ and $\map(s')(x)=(\valuation',\bag',st')$, the following conditions hold:
\begin{enumerate}
    \item $\valuation\downharpoonright V=\valuation'\downharpoonright V$,
    \item $|\bag \downharpoonright M|=|\bag' \downharpoonright M|$,
    \item $\forall m\in \Msg\cdot (sid,rid,m,par,\arv)\in\bag\downharpoonright M \Leftrightarrow (sid,rid,m,par,\arv+\delta)\in\bag'\downharpoonright M.$
\end{enumerate}

\end{definition}

\noindent Two states are shift-equivalent as defined in \cite{DBLP:conf/agere/KhamespanahSKSI12,KhamespanahSSKI15} when they are $\Name$-$\Var$ shift-equivalent.

\subsection{Tiny Twins \label{subsec::tiny}}
We define tiny twins~\cite{DBLP:journals/jpdc/MoradiPAS24}, again following Moradi et al. Given a Timed Rebeca model $\mathcal{M}$, a tiny twin $\mathcal{T}$ is a slice that is weakly timed equivalent to $\mathcal{M}$ with respect to a subset of observable messages $O\subseteq \Name$. This means that $\mathcal{T}$ cannot be distinguished from $\mathcal{M}$ using timed behavior over $O$.
The operator $\mathcal{A}_O$, parameterized with $O\subseteq \Name$, abstracts the transitions with a label $m\not\in O$ to $\tau$. 

The abstracted semantic model is reduced by weak timed bisimulation $\simeq_{wtb}$, as defined in Section \ref{subsec::equiv}. We restrict to Zeno-free Timed Rebeca models: every execution contains only
finitely many instantaneous action transitions within any bounded interval of logical time. In particular, an execution cannot perform infinitely many
internal actions without time progressing. This assumption ensures that a
finite elapsed-time segment contains only finitely many internal steps. 

\begin{definition}[Tiny Twin]\label{Def::tiny}
Tiny Twin from a Timed Rebeca model $\mathcal{M}$ with respect to the observable messages $O$ is formally achieved from $\mathcal{A}_O(BCTTS(\mathcal{M}))/\simeq_{wtb}$. 
\end{definition}



Two Timed Rebeca models are considered behaviorally equivalent
when the initial states of their corresponding BCTTSs are weakly timed bisimilar.

\begin{proposition}\label{Pro::abstraction}
Let $\mathcal{T}_1$ and $\mathcal{T}_2$ be two BCTTSs such that $\mathcal{A}_O(\mathcal{T}_1)\simeq_{\mathrm{wtb}}\mathcal{A}_O(\mathcal{T}_2)$ for an arbitrary set $O\subseteq \Name$. Then, $\mathcal{T}_1\simeq_{\mathrm{wtb}}\mathcal{T}_2$.
\end{proposition}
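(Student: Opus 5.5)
The plan is to read the hypothesis ``for an arbitrary set $O\subseteq\Name$'' as universally quantified, namely that $\mathcal{A}_O(\mathcal{T}_1)\simeq_{\mathrm{wtb}}\mathcal{A}_O(\mathcal{T}_2)$ holds for every choice of $O$. Then it suffices to instantiate $O$ with the largest possible set, $O=\Name$, and show that the abstraction is the identity there. I would state this reading explicitly at the start. Under the existential reading (``for some $O$'') the claim fails: taking $O=\emptyset$ turns every message label into $\tau$, so any two BCTTSs with the same timing behaviour would be identified.

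The first step is to fix what the labels of a BCTTS are. By Definition~\ref{def:rtts} and the rules of Tab.~\ref{Tab:sos-rules}, the transition relation of a CTTS is a subset of $S\times(\Name\cup\int)\times S$. Only the rules $\textsc{TakeMessage}$ (with label $m\in\Name$) and $\textsc{TimeProg}$ (with a label in $\int$) produce global transitions. The $\tau$-steps of the starred rules stay inside the premise $\aoverto{}^*$ of $\textsc{TakeMessage}$ and never appear as global transitions. Passing to the bounded system by shift equivalence merges states but does not change the label set. Hence a BCTTS has no $\tau$-labelled transitions.

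The second step is to check that $\mathcal{A}_{\Name}(\mathcal{T})=\mathcal{T}$ for every BCTTS $\mathcal{T}$. The operator $\mathcal{A}_O$ renames to $\tau$ exactly the labels $m\in\Name$ with $m\notin O$, and leaves time labels in $\int$ untouched. For $O=\Name$ no message label satisfies $m\notin O$, so $\mathcal{A}_{\Name}$ leaves every transition, the state set and the initial state unchanged. From the hypothesis instantiated at $O=\Name$ we then get a weak timed bisimulation $\mathcal{R}$ between $\mathcal{A}_{\Name}(\mathcal{T}_1)=\mathcal{T}_1$ and $\mathcal{A}_{\Name}(\mathcal{T}_2)=\mathcal{T}_2$ relating the initial states. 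The relations $\Rightarrow$, $\xRightarrow{a}$ and $\toverto{t}$ used in Definition~\ref{def:twt-equivalence} are defined from the same transitions on both sides, so $\mathcal{R}$ itself witnesses $\mathcal{T}_1\simeq_{\mathrm{wtb}}\mathcal{T}_2$. Definition~\ref{def:twt-equivalence} is stated for a single system, so I would apply it to the disjoint union of $\mathcal{T}_1$ and $\mathcal{T}_2$, which is how the paper's notion of bisimilarity between two systems should be read anyway.

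The main obstacle is not technical but interpretive: settling the quantifier on $O$. A secondary point needs one sentence of care, namely that shift-equivalence quotienting in the BCTTS introduces no $\tau$-labels. Once both are settled, the argument is a direct instantiation with no coinductive construction needed.
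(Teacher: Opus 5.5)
The paper states this proposition without proof, so there is no argument of the authors' to compare yours with. Under the reading you adopt, your argument is valid. Global BCTTS labels lie in $\Name\cup\mathbb{N}$, because the $\tau$-steps of the starred rules occur only inside the premise of \textsc{TakeMessage}. $\mathcal{A}_{\Name}$ is the identity, provided $\mathcal{A}_O$ renames only message labels (the intended meaning). Your $O=\emptyset$ example correctly shows that the fixed-$O$ reading is false.

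The trouble is that your reading makes the proposition a tautology that cannot do the one job the paper gives it. In the soundness theorem the authors establish $\mathcal{A}_{RM}(\BCTTS(\mathcal{M}_1))\simeq_{\mathrm{wtb}}\mathcal{A}_{RM}(\BCTTS(\mathcal{M}_2))$ for a single set $RM\supseteq O$. They then cite this proposition to obtain the same for $\mathcal{A}_O$. Your hypothesis (bisimilarity for \emph{every} $O$) is never available there. The fixed-$O$ reading, besides being false, would only give bisimilarity of the unabstracted systems, not of their $\mathcal{A}_O$-abstractions.

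What the paper actually needs is that weak timed bisimilarity survives further hiding: if $O\subseteq O'$ and $\mathcal{A}_{O'}(\mathcal{T}_1)\simeq_{\mathrm{wtb}}\mathcal{A}_{O'}(\mathcal{T}_2)$, then $\mathcal{A}_{O}(\mathcal{T}_1)\simeq_{\mathrm{wtb}}\mathcal{A}_{O}(\mathcal{T}_2)$. Its case $O'=\Name$ is exactly the converse of the displayed implication, so the printed statement most likely has its direction reversed. This lemma is the missing idea, and it needs a real, if routine, argument:

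\begin{itemize}
\item Keep the same relation $\mathcal{R}$ and use $\mathcal{A}_O=\mathcal{A}_O\circ\mathcal{A}_{O'}$.
\item A move $s\xRightarrow{a}s'$ of $\mathcal{A}_O(\mathcal{T}_1)$ with $a\in O$ factors into finitely many weak moves of $\mathcal{A}_{O'}(\mathcal{T}_1)$ labelled by names in $O'\setminus O$, plus one labelled $a$. Match them one by one through $\mathcal{R}$. Hiding $O'\setminus O$ on the matching side yields $u\xRightarrow{a}u'$ with $s'\,\mathcal{R}\,u'$.
\item For $s\toverto{t}s'$, split the path into such weak moves and single time steps $\overto{t_i}$. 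Each time step is already a $\toverto{t_i}$ in $\mathcal{A}_{O'}$, with pure $\tau$-segments absorbed into neighbouring pieces. Match each piece; after hiding, the concatenation of the matching pieces is a $\toverto{t}$ with $t=\sum_i t_i$.
\end{itemize}

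I would state both literal readings explicitly, observe that one is trivial and the other false, and prove this hiding lemma as the statement the paper relies on.
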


To compact the tiny twins by removing all the $\tau$-transitions, we can compute its modulo weak timed trace equivalence, which also combines time elapse transitions separated by $\tau$-transitions. As our BCTTS models are generally deadlock-free, such a reduction can provide more insight into the model
\section{{Dependency Graph for Static Analysis of
Timed Rebeca Models 
}}\label{sec:rdg}

To compare two Timed Rebeca models, each model is first transformed into an
intermediate graph representation called a Rebeca Dependency Graph (RDG). The
semantic comparison can be reduced to graph traversal and comparing the nodes
and edges of the corresponding graphs, resulting in behavioral equivalence
modulo weak timed bisimilarity. The proposed RDG is a simplified version of the
RDG defined in~\cite{SabouriSirjani2010Actor}, primarily introduced for slicing
and reduction. In contrast, the RDG developed here is intended to support the
comparison of the observable behaviour of the BCTTSs derived from two Timed
Rebeca models. This connection is made precise by the soundness theorem below,
which shows that equivalent observable RDG slices imply weak timed
bisimulation, as defined in Definition~\ref{def:twt-equivalence}.


\subsection{Rebeca Dependence Graph}\label{subsec:rdg-overview}

The RDG is centered around six kinds of graph nodes: class-object nodes, entry points of execution, statements, formal-in/actual-in parameters, and state
variables. Using the edges, the graph makes explicit which class owns a
member, which entry activates a computation, which values flow into
expressions and assignments, and which send statements activate which message
servers. 
Accordingly,
the RDG is defined as follows.

\begin{definition}[Rebeca Dependence Graph]
A Rebeca dependence graph (RDG) for a Timed Rebeca model $\mathcal{M}=(C,\Ins,\conf)$ is a directed graph $\RDG(\mathcal{M})=(N,E)$ where $N$ and $E$ are the union of disjoint sets $N=N_S\cup N_C\cup N_E\cup N_{FP} \cup N_{AP}  \cup N_{SV}$, and $E=E_C\cup E_D \cup E_I \cup E_M\cup E_B\cup E_A $: 
\begin{itemize}
    \item $N_C$ denotes the class-object node set, one for each reactive class $c_i\in C$.
    
    \item $N_E$ denotes the entry-node set representing the entry points for executable elements. One entry node is considered for the
    \texttt{main} block, for each constructor, and for each message server.
    
    \item $N_{FP}$ denotes the formal-in parameter node set.  A formal-in node is considered for each formal parameter of a message server, representing an assignment statement that copies values from the message to the formal parameter.

    \item $N_{AP}$ denotes the actual-in parameter node set. An actual-in node represents an assignment statement copying the value of an actual parameter to the message.

    \item $N_{SV}$ denotes the state-variable node set, one for each state variable declaration of a reactive class.

    \item $N_S$ denotes the statement-node set. Each node is associated with an statement: assignments, nondeterministic assignments, conditional statements, send statements, and object-instantiation statements in \texttt{main}. 

    \item $E_C$ denotes the control-dependence edge set, connecting an entry node to its statements, i.e.,  each message server/constructor entry node to its formal-in parameters and statement body, and the \texttt{main} entry node to the object-instantiation statements in the \texttt{main} block. It also connects a conditional statement to its corresponding branches, each send statement to its actual parameters, and each object-instantiation statement to the corresponding class-object node. 

    \item $E_D$ denotes the data-dependence edge set, as defined in Sec. \ref{subsec::PDG}.
    
    \item $E_I$ denotes the intra-rebec dependence edge set. This dependency exists between the last statement of a message server which is assigning a value to a variable and the first use of that variable in another message server.
    
    \item $E_M$ denotes the member-dependence edge set, connecting class-object nodes to the members of that class, namely the state-variable nodes and the entry nodes of constructors, and message servers.
    
    \item $E_B$ denotes the binding-dependence edge set, connecting the actual-in parameter to its corresponding formal-in parameter. 
    
    \item $E_A$ denotes the activation edge set, connecting a send statement node to the entry node of the message server that is activated by that send.
    
\end{itemize}
We assume that each node has a content, denoted by $\cont(n)$, representing its corresponding syntactic content: 
\[
\begin{aligned}
&\forall C_i\in C\mathbin{\cdot} C_i=(c_i,M_i,V_i,K_i)
  \Rightarrow \exists n_c\in N_C\mathbin{\cdot} \cont(n_c)=c_i \wedge \\
&\quad \forall (v_i,T_i)\in V_i\mathbin{\cdot}\exists n_{v_i}\in N_{SV}\mathbin{\cdot}
  \cont(n_{v_i})=(v_i,T_i) \wedge (n_c,n_{v_i})\in E_M \wedge \\
&\quad \exists n_{cc}\in N_E \mathbin{\cdot} \cont(n_{cc})=\texttt{constructor}
  \wedge (n_c,n_{cc})\in E_M \wedge \\
&\quad \forall \mathit{msgsrv}\in M_i\mathbin{\cdot} \mathit{msgsrv}=(m,\nu,sts)
  \Rightarrow \exists n_m\in N_E\mathbin{\cdot} \cont(n_m)=m
  \wedge (n_c,n_m)\in E_M \wedge \\
&\qquad \forall (v,T)\in\nu\mathbin{\cdot} \exists n_p\in N_{FP}\mathbin{\cdot}
  \cont(n_p)=(v,T) \wedge (n_m,n_p)\in E_C \wedge \\
&\qquad \forall st\in sts\mathbin{\cdot} \exists n_s\in N_S\mathbin{\cdot}
  \cont(n_s)=st \wedge (n_m,n_s)\in E_C \wedge \\
&\quad \exists!\, n_{\mathit{main}}\in N_E\mathbin{\cdot}
  \cont(n_{\mathit{main}})=\texttt{main} \wedge
  \forall (x,c_i)\in\Ins\mathbin{\cdot} \exists n_I\in N_S\mathbin{\cdot}
  (n_{\mathit{main}},n_I)\in E_C \wedge (n_I,n_c)\in E_C
\end{aligned}
\]
\end{definition}
We use dot notation to access the elements of an RDG $\mathcal{G}$. For
example, $\mathcal{G}.N_{SV}$ and $\mathcal{G}.N_P$ denote its sets of state
variables and parameters, respectively. We write $n\overto{} n'$ when $(n,n')\in \mathcal{G}.E$.  

The content of an entry node is its qualified name, which is $\it main$ for the entry node of main block, and the message server's name for the message server's entry nodes. 
The content of a statement node is its statement form together with the
identifiers, expressions, receiver, target message server, and timing
annotations that occur in it. The content of a state variable or parameter node is its name and type. 

We remark that the data-dependence edges are defined within a message server. To follow-up the effect of statements on the state variables among the message servers, we consider the intra-rebec and binding dependence edges. Since message passing in Rebeca is asynchronous, execution does not move
directly to the called message server when the send statement is executed.
Instead, the send statement creates a message for the receiver, and the
activation edge records that this message may later enable execution of the
target message server. Therefore, we treat send statements as similar to other statements together with a set of assignments for actual parameters.  The main role of  member-dependence edges is to preserve ownership information and to ensure that if a member is relevant to a semantic comparison, the corresponding  reactive class is also reachable in the graph.


Our RDG definition is simpler than the one defined in~\cite{SabouriSirjani2010Actor} due to ignoring known rebecs, and treating send statements as normal statements. Originally a separate set of activation nodes together with activation edges capture send statements. In addition, constructors are
treated as ordinary entry nodes. So,  initialization remains within the same structural framework as the rest of the model.

\begin{example}
Figure~\ref{fig:rebec1} shows a simple Timed Rebeca model and its RDG.
The model contains two reactive classes, $A$ and $B$. Class $A$ sends
$\mathit{m1}$ to itself, and $\mathit{m1}$ sends $\mathit{m2}$ to $B$ depending on the value of its parameter $z$. Message
server $\mathit{m2}$ updates $x$ and $y$, and then either sends $\mathit{m2}$ to
itself or $\mathit{m1}$ to $A$ depending on the value of the state variable $y$.

The statement $y=x$ defines $y$, and $\mathit{if}(y)$ uses it. Therefore, there is a data-dependence edge among them. The statement $x=\neg x$ defines $x$, and its value is used when sending the message $m1$ within the parameter. Therefore, there is data-dependence edge from $x=\neg x$ to the actual-in parameter $z_{in}=x||y$. The actual-in parameter is connected to the formal-in parameter $z_{in}=z$. There is an intra-rebec dependency edge from the last assignment to the state variable $x$ in the constructor of class $B$ to its first usage in the message server $m2$. 

\end{example}

\begin{figure}
	\centering
\begin{subfigure}[b]{0.3\textwidth}
\begin{lstlisting}[language=Hrebeca, basicstyle=\small]
reactiveclass A(10){
  knownrebecs{ B b; }
  A(){ self.m1(true); }
  msgsrv m1(boolean z){  
    if (z) b.m2(); }}
reactiveclass B(10){
  knownrebecs{ A a; }
  B(){ x = true; }
  statevars {
    boolean x, y;}
  msgsrv m2(){
    y = x;
    x = !x;
    if (y)
      self.m2() after (5);
    else
      a.m1(x || y);}}
main {
	A a(b):();
	B b(a):();}
\end{lstlisting}
	\caption{A simple Timed Rebeca model}
	\label{fig:code1}
    \end{subfigure}
\hfill
\begin{subfigure}[b]{0.65\textwidth}
	\includegraphics[width=\textwidth]{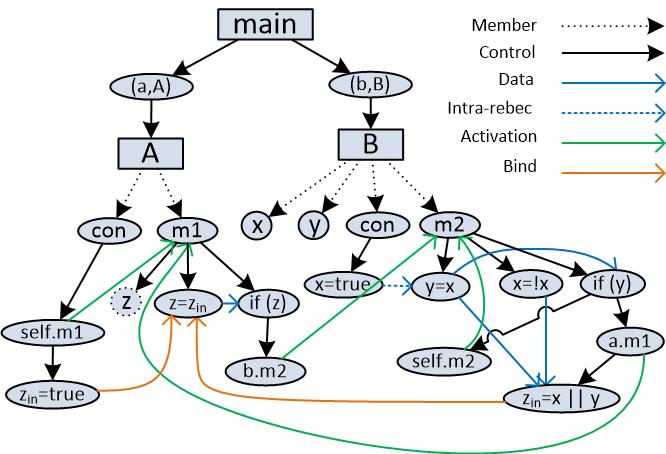}
    \caption{The Corresponding RDG: the nodes in the observable RDG slice with respect to $m1$ have been colored }
\label{Fig::RDG1}
\end{subfigure}
\caption{An example of a Timed Rebeca model and its RDG extraction}
\label{fig:rebec1}
\end{figure}

\subsection{Slicing RDGs and the Equivalence Relation on RDGs}\label{subsec::slice}
The comparison is performed on the parts of the models that are relevant to
the observable messages. Let $O$ be the set of observable messages given as
input.  The following definition makes this part of an RDG precise. We write $\node(\mathcal{G},m)$ to identify the node in a RDG $\mathcal{G}$ that correspond to the entry nodes of the message server $m$, i.e, $\node(\mathcal{G},m)= n$ where $n\in \mathcal{G}.N_E\wedge \cont(n)=m$. We trivially promote it for the set of message names $\node(\mathcal{G},O)$.

\begin{definition}[Observable RDG Slice]
\label{def:observable-rdg-slice}
Let $\mathcal{G}=(N,E)$ be the RDG of a Timed Rebeca model $\mathcal{M}$. 
The Slice of $\mathcal{G}$ with respect to the set of observable message names $O$, denoted by $\Slice(\mathcal{G},O)=(N',E')$, is a subgraph of $\mathcal{G}$, where \[
\begin{array}{l}
N' = \{\,n\in N\,\mid\,
\exists n_m\in \node(\mathcal{G},O)\cdot  n \overto{}^* n_m\,\}\,\cup \,\{ v \in N_{SV} \cup N_{FP}\,\mid\, \exists st\in N_{S} \text{use or define $v$}\}\\
E' = \{\,(n_1,n_2)\in E\,\mid\,
n_1\in N' \wedge n_2\in N'\}
\end{array}
\]
\end{definition}

The slice is therefore obtained by backward reachability from the observable
message-server entries. It contains the statements, actual-in parameters, activation sites, class-object nodes, and dependency edges that can affect an observable message directly or through a chain of RDG dependencies. In particular, every message-server entry retained in the slice has a directed
dependency path to an observable message-server entry. The messages represented
by these entry nodes are called relevant messages in
Definition~\ref{def:relevant-messages}. The slice also includes the state-variable and parameter declarations used or
defined by the reachable statements.

\begin{example}
Figure~\ref{fig:rebec2} shows a modified version of the simple Timed Rebeca model given in Figure \ref{fig:rebec1} and its RDG.
Class $A$ sends $\mathit{m1}$ to itself, and $\mathit{m1}$ sends $\mathit{m2}$ to $B$ with no restriction. Message server $\mathit{m2}$  has no change.

We call the Timed Rebeca model in Figure~\ref{fig:code1} and it revision in Figure~\ref{fig:code2}, $\mathcal{M}_{\it AB}$ and $\mathcal{M}_{\it AB}'$, respectively.  We computed the $\Slice(\mathcal{M}_{\it AB},\{m1\})$ and $\Slice(\mathcal{M}_{\it AB}',\{m1\})$ by coloring the nodes in Figures \ref{Fig::RDG1} and \ref{Fig::RDG2}, respectively. As $m1$ sends message $m2$ irrespective to the value of its parameter $z$, the nodes $z_{in}=x||y$ and $x=!x$ in the message server $m2$, $z=z_{in}$ in the message server $m1$, and $z_{in}=true$ in the constructor $A$ are not reachable. Therefore, the parameter $z$ is useless in this revision. 

\end{example}

\begin{figure}
	\centering
\begin{subfigure}[b]{0.3\textwidth}
\begin{lstlisting}[language=Hrebeca, basicstyle=\small]
reactiveclass A(10){
  knownrebecs{ B b; }
  A(){ self.m1(true); }
  msgsrv m1(boolean z){  
    b.m2(); }}
reactiveclass B(10){
  knownrebecs{ A a; }
  B(){ x = true; }
  statevars {
    boolean x, y;}
  msgsrv m2(){
    y = x;
    x = !x;
    if (y)
      self.m2() after (5);
    else
      a.m1(x || y);}}
main {
	A a(b):();
	B b(a):();}
\end{lstlisting}
	\caption{A modified version of the simple Timed Rebeca model}
	\label{fig:code2}
    \end{subfigure}
\hfill
\begin{subfigure}[b]{0.65\textwidth}
	\includegraphics[width=\textwidth]{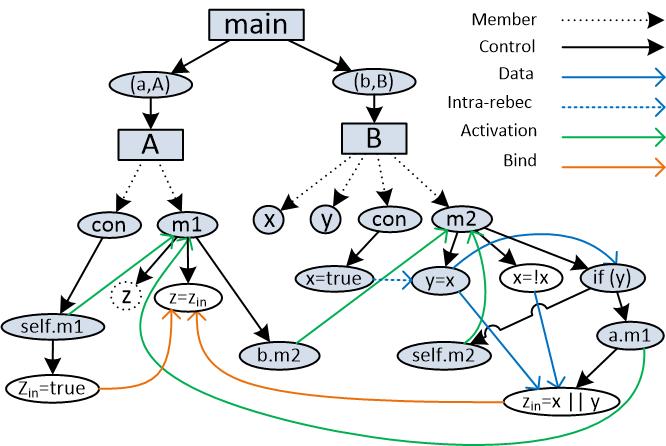}
    \caption{The Corresponding RDG: the nodes in the observable RDG slice with respect to $m1$ have been colored }
\label{Fig::RDG2}
\end{subfigure}
\caption{An example of a Timed Rebeca model and its RDG extraction}
\label{fig:rebec2}
\end{figure}

Intuitively, two RDGs are called equivalent if they are isomorphic; preserving the same structure among the same kind of nodes with the same content. 

\begin{definition}[RDG Equivalence]
\label{def:rdg-slice-equivalence}
Given two RDGs $G_1=(N^1,E^1)$ and $G_2=(N^2,E^2)$ are called equivalent, denoted by $G_1\equiv G_2$, if and only if there is a bijective mapping $\phi: N^1\rightarrow N^2$ such that \begin{itemize}
\item $
\forall n\in N_1\cdot n\in G_1.N_X \Leftrightarrow \phi(n)\in G_2.N_X \wedge 
\cont(n)=\cont(n')$, where $X\in\{C,E,P,{\it SV},S\}$;
\item $\forall n\in {N^1_C},v\in N^1_{SV}\cdot (n,v)\in E_M^1 \Leftrightarrow (\phi(n),\phi(v))\in E_M^2$;
\item $\forall n\in {N^1_C},e\in N^1_{E}\cdot (n,e)\in E_M^1 \Leftrightarrow (\phi(n),\phi(e))\in E_M^2$;
\item $\forall n\in {N^1_E},p\in N^1_{FP}\cdot  (n,p)\in E_C^1 \Leftrightarrow (\phi(n),\phi(p))\in E_C^2$;
\item $\forall e\in N^1_{E}, st\in N_S^1\cdot (e,st)\in E_C^1 \Leftrightarrow (\phi(e),\phi(st))\in E_C^2$;
\item $\forall st\in N_S^1 , e\in N^1_{E}\cdot (st,e)\in E_A^1 \Leftrightarrow (\phi(st),\phi(e))\in E_A^2$;
\item $\forall st_1\in N_S^1 , st_2\in N_S^1\cdot(st_1,st_2)\in E_C^1 \Leftrightarrow (\phi(st_1),d,\phi(st_2))\in E_C^2$;
\item $\forall st_1\in N_S^1 , st_2\in N_S^1\cdot(st_1,st_2)\in E_D^1 \Leftrightarrow (\phi(st_1),\phi(st_2))\in E_D^2$;
\item $\forall st_1\in N_S^1, p'\in N_{AP}^1\cdot (st_1,p')\in E_C^1 \Leftrightarrow (\phi(st_1),\phi(p'))\in E_C^2$;

\end{itemize}
\end{definition}

\subsection{Static Comparison of Timed Rebeca Models}

A variable is relevant to a slice of an RDG if its corresponding state-variable node or parameter node belongs to the observable RDG slice. These variables are those whose values may affect the guards, assignments, message activations, timing annotations, or parameter values that contribute to the observable behaviour.

\begin{definition}[Relevant Variables] Given the Timed Rebeca model $\mathcal{M}$ and the set of observable message names $O$, the set of relevant variables are those reachable state variables and parameters in $\Slice(\RDG(\mathcal{M}),O)$ that affect its observational behavior: 
\[RV(\mathcal{M},O)= \Slice(\RDG(\mathcal{M}),O).N_{SV} \cup \Slice(\RDG(\mathcal{M}),O).N_{FP}.\]
\end{definition}

A message is relevant if its message-server entry node belongs to the observable
RDG slice. Hence, the relevant messages include the observable messages together with the internal messages that are
reached from the observable messages in the slice because they may affect the production, scheduling, or parameter
values of observable messages.

\begin{definition}[Relevant Message Names]\label{def:relevant-messages} Given the Timed Rebeca model $\mathcal{M}$ and the set of observable message names $O$, the set of relevant message names are those messages that affect its observational behavior, defined as:
\[
RM(\mathcal{M},O)=\{{\it cont}(n) \,\mid\, n \in \Slice(\RDG(\mathcal{M}),O).N_E \wedge \cont(n)\in \Name\}
\]
\end{definition}



The Proposition \ref{Pro::stutter} explains that the state changes due to the handling of the irrelevant message names are shift-equivalent with respect to the relevant message names and variables. 
\begin{proposition}\label{Pro::stutter}
Given the Timed Rebeca model $\mathcal{M}$ and the set of observable message names $O$, let $\mathcal{A}_{RM}(\CTTS(\mathcal{M}))=(S,\rightarrow,s_0)$, where $RV=RV(\mathcal{M},O)$ and $RM=RM(\mathcal{M},O)$. For any $s\in S$ if  $s\Rightarrow s'$, then $s\simeq_{RV}^{RM}s'$.
\end{proposition}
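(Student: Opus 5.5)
We argue by induction on the length of the $\tau$-path $s\Rightarrow s'$. Since $\simeq_{RV}^{RM}$ is reflexive (take $\delta=0$) and transitive (add the offsets and compose the bag correspondences), it is enough to show that a single step $s\overto{\tau}s''$ in $\mathcal{A}_{RM}(\CTTS(\mathcal{M}))$ gives $s\simeq_{RV}^{RM}s''$. Because $\mathcal{A}_{RM}$ relabels to $\tau$ exactly the \textsc{TakeMessage} transitions $s\overto{m}s''$ with $m\notin RM$, and \textsc{TakeMessage} leaves the global time unchanged, the witness will be $\delta=0$. Time-progress transitions keep their numeric label and so do not occur in $\Rightarrow$.

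Fix such a step. Some rebec $x$ takes a message $\m=(y,x,m,par,t)$ with $m\notin RM$ and runs $\body(x,m)$ to completion, producing $\valuation'$ and $out$. I would check the three conditions of Definition~\ref{Def::shift} separately.

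\emph{Condition 1 (valuations on $RV$).} Only rebec $x$ changes its valuation. Suppose some state variable $v\in RV$ of $x$ were assigned by a statement $st$ in $\body(x,m)$. The node of $v$ lies in the slice because some sliced statement uses or defines it. The plan is to follow the intra-rebec edge $E_I$ from the last definition of $v$ in $m$ to the first use of $v$ in another server, and from there the data, control and activation edges that put that use in the slice. This yields a path $st\overto{}^*n_o$ with $n_o\in\node(\RDG(\mathcal{M}),O)$. Since $st$ is control-dependent on the entry of $m$, the entry of $m$ would then also reach $n_o$, giving $m\in RM$, a contradiction.

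\emph{Conditions 2 and 3 (bags restricted to $RM$).} The only bag changes are the removal of $\m$, which is not counted because $m\notin RM$, and the appends from $\upd(\cdot,out)$. If a message in $out$ had a name $m'\in RM$, it was produced by a send node $st$ in $\body(x,m)$ with $(st,\node(m'))\in E_A$. As $\node(m')$ reaches $O$, so does $st$, and hence, via $E_C$, the entry of $m$. This again gives $m\in RM$, a contradiction. So $\bag\downharpoonright RM$ is unchanged for every rebec, and conditions 2 and 3 hold with $\delta=0$.

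\emph{Main obstacle.} The delicate point is the variable case of Condition 1. The slice definition adds state-variable nodes that are merely used or defined by reachable statements, so a variable used by a sliced statement could in principle be defined in an irrelevant server $m$. To close this gap I would show that every such definition site is linked by an $E_I$ or $E_D$ edge to that use, and is therefore backward-reachable itself. This forces $m$ into the slice. The argument relies on the reading of $E_I$ as connecting definitions in one server to uses in another; if the ``last assignment / first use'' wording is too narrow, it has to be strengthened to cover all reaching definitions.
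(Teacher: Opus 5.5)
Your route matches the paper's. A $\tau$-step of $\mathcal{A}_{RM}(\CTTS(\mathcal{M}))$ is a \textsc{TakeMessage} step for some $m\notin RM$, and the witness is $\delta=0$. A relevant message in $out$ is excluded because its send node has an $E_A$ edge into the slice and is reached from the entry of $m$ along $E_C$ edges. $\tau$-sequences then follow by reflexivity and transitivity. That part is fine and more explicit than the paper.

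The genuine gap is Condition~1, exactly where you placed it. The claim you intend to prove there is that some write of $v$ in $m$ is linked by $E_I$ or $E_D$ edges into the slice. This claim is false for $E_I$ as defined (last definition to \emph{first} use), and also for a reaching-definitions version. Membership $v\in RV$ only says that some sliced statement reads or writes $v$. It does not say that a value written by another server can flow into a sliced read.

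For instance, give a class state variables $v,w$, a server $r$ with body \texttt{w = v; v = 5; if (v == 5) a.o();} where $o\in O$, and a server $q$ with body \texttt{v = 7;}, both sent by the constructor at time $0$. Then \texttt{v = 5} and the conditional are in the slice, so $v\in RV$. The $E_I$ edge out of \texttt{v = 7} targets the first use \texttt{w = v}, which reaches nothing relevant. In the reaching-definitions reading, \texttt{v = 7} is killed by \texttt{v = 5} before the sliced read. So $q\notin RM$. Yet handling $r$ and then $q$ gives a $\tau$-step from a state with $v=5$ to one with $v=7$, and Condition~1 fails for every $\delta$.

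Thus, with the paper's definitions the step cannot be filled in at all. Since the write in $q$ is observationally harmless, the problem is that $\simeq_{RV}^{RM}$ compares too many variables, not that the slice misses a real dependency. The paper's proof does not resolve this either; it simply asserts that the irrelevant handler does not ``affect the relevant variables''.

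A provable version requires changing a definition. One option is to make $E_I$ conservative, with an edge from every write of a state variable in one server to every statement of another server of the same class that reads or writes it. Then your contradiction argument for Condition~1 goes through verbatim, since the entry of $m$ reaches that write via $E_C$. The other option is to restrict Condition~1 to variables that have an upward-exposed read at some sliced statement. In that case you must recheck the main soundness theorem against the weaker relation.
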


\begin{proof}
When $s\overto{\tau}s'$, it is due to the handling of an irrelevant message $\m=(y,x,m,par,t)$ by the actor $x$ such that  $m\not\in RM$. Def. \ref{def:observable-rdg-slice} and \ref{def:relevant-messages} imply that the entry node corresponding to the message server $m$ in the RDG $\Slice(\RDG(\mathcal{M}),O)$ does not reach any message server of message names in $RM$. In other words, its process does not trigger any relevant messages or affect the relevant variables. Assuming that $s=(\mathfrak{s},t)$ and $\mathfrak{s}(x)=(\valuation,\bag,\epsilon)$, by application of $\textsc{TakeMessage}$, it holds that $(\valuation\uplus par,\body(x,m),\epsilon),t\aoverto{}^*(\valuation',\epsilon,out),t$ where $out\downharpoonright RM = \epsilon$ and $\valuation \downharpoonright RV = \valuation' \downharpoonright RV$. Therefore, upon dispatching the massages of $out$, i.e., $s'=(\upd(\mathfrak{s}[x\mapsto (\valuation'\downharpoonright \vars(x),\bag\setminus \m,\epsilon)],out),t)$, the bag of rebecs in the resulting state $s'$ is not expanded by any relevant message. Taking $\delta=0$, by Def. \ref{Def::shift}, it results in $s\simeq_{RV}^{RM}s'$. The same holds for a sequence of $\tau$-transitions.
\end{proof}

The following theorem indicates that two Timed Rebeca models with equivalent observable slices with respect to the message names $O$, induce weakly timed bisimilar semantic models. 

\begin{theorem}[Soundness of the Syntax-level Comparison]
Given the timed Rebeca models $\mathcal{M}_1$ and $\mathcal{M}_2$, and a set of observable message names $O$, if
the observable RDG slices of the two models are equivalent, then the BCTTSs
generated from the two models are weak timed bisimilar:
\[
\Slice(\RDG(\mathcal{M}_1),O)\equiv \Slice(\RDG(\mathcal{M}_2),O) \Rightarrow \mathcal{A}_O(\BCTTS(\mathcal{M}_1))\simeq_{wtb} \mathcal{A}_O(\BCTTS(\mathcal{M}_2))
\]
\end{theorem}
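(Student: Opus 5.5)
The plan is to build an explicit relation between states of $\mathcal{A}_O(\BCTTS(\mathcal{M}_1))$ and $\mathcal{A}_O(\BCTTS(\mathcal{M}_2))$ and show it is a weak timed bisimulation in the sense of Definition~\ref{def:twt-equivalence}. Let $\phi$ be the bijection given by $\Slice(\RDG(\mathcal{M}_1),O)\equiv\Slice(\RDG(\mathcal{M}_2),O)$. The first step is to note that $\phi$ induces identifications $RV(\mathcal{M}_1,O)\leftrightarrow RV(\mathcal{M}_2,O)$ and $RM(\mathcal{M}_1,O)\leftrightarrow RM(\mathcal{M}_2,O)$. This holds because $\phi$ preserves node kinds and contents, so entry nodes with the same message name and state-variable and parameter nodes with the same name and type correspond.

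The relation is $\mathcal{R}$, defined by $s_1\,\mathcal{R}\,s_2$ iff $\now(s_1)=\now(s_2)$ and, for every rebec $x$, three conditions hold:
\begin{itemize}
\item the restrictions of the valuations to the relevant variables coincide;
\item the restrictions of the bags to relevant messages coincide, including parameters and arrival times;
\item no rebec is mid-execution, which is automatic in the coarse semantics.
\end{itemize}
Equivalently, $\mathcal{R}$ is a ``cross-model'' $RM$-$RV$ shift-equivalence with $\delta=0$. Because the bounded systems only identify shift-equivalent states, I would first argue on the CTTSs and then lift the result along the shift quotient, using the fact that shift-equivalence with any offset is compatible with $\mathcal{R}$. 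The initial states are related because the constructors' relevant statements lie in the slice. By the isomorphism, those statements are syntactically identical and have the same control and data dependences. Hence, by Theorem~\ref{The::StrongEquvalence}, applied to the sliced constructor bodies, they produce the same relevant variable values and the same relevant initial messages.

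The transfer conditions split into three cases.
\begin{enumerate}
\item \textbf{Internal moves.} A $\tau$-move of $s_1$ handles either an irrelevant message or a relevant but unobservable one. In the irrelevant case, Proposition~\ref{Pro::stutter} gives $s_1\simeq^{RM}_{RV}s_1'$, so $s_2$ answers with the empty move. In the relevant case, the same message is present in $s_2$'s bag by relatedness, so $s_2$ takes the corresponding $\tau$-step.
\item \textbf{Observable moves $a\in O$.} Again the message is present in both bags. The two handler bodies restricted to the slice are isomorphic PDG fragments run on equal relevant inputs. Theorem~\ref{The::StrongEquvalence}, together with the edges $E_C$, $E_D$, $E_A$ and $E_B$ preserved by $\phi$, gives equal updates to the relevant variables and an equal relevant part of $out$. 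The send contents cover receiver, message, actual-in expressions and the \texttt{after} annotation, so arrival times agree. Statements outside the slice do not influence relevant variables or relevant sends, by the definition of backward reachability. Therefore $\upd$ yields related successors.
\item \textbf{Time moves.} A time step needs all rebecs idle with no message enabled at $\now$. Here I would use Zeno-freeness to let $s_2$ first perform finitely many $\tau$-steps that exhaust its currently enabled messages, irrelevant ones included, each preserving $\mathcal{R}$ by Proposition~\ref{Pro::stutter}. I would then argue that the minimal relevant arrival time is reached in both systems, interleaving irrelevant time steps that are summed by $\toverto{t}$.
\end{enumerate}

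I expect two obstacles. The first, and main one, is the time case. Irrelevant pending messages in one model may force intermediate clock advances that have no counterpart in the other model. Worse, an irrelevant message with arrival time $\now$ could block the time step in one model only. The planned fix is to compare accumulated delay rather than individual steps, which is exactly what the weak relation $\toverto{t}$ permits. Zeno-freeness ensures the internal prefixes are finite, and Proposition~\ref{Pro::abstraction} allows passing between abstraction levels if needed. The second obstacle is justifying that the sliced handler's effect on relevant data equals the effect of the full handler. This needs a careful reading of the slice definition, in particular that $E_I$ edges capture cross-handler flow through state variables, so that relevant variables are never written by out-of-slice statements.
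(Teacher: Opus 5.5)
Your architecture is essentially the paper's. You use a cross-model $RM$-$RV$ shift-equivalence, Proposition~\ref{Pro::stutter} for irrelevant steps, and Theorem~\ref{The::StrongEquvalence} for relevant handler executions. Working at $\mathcal{A}_O$ with $\delta=0$ instead of at $\mathcal{A}_{RM}$ with an arbitrary offset makes no difference.

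\textbf{The time case fails.} Your proposed fix misreads what $\toverto{t}$ provides. In Definition~\ref{def:twt-equivalence} the challenger picks both $t$ and the end point of $s\toverto{t}s'$; summing consecutive delays only helps the defender. \textsc{TimeProg} lets the clock jump only to the minimal pending arrival time, relevant or not. So a model with an extra irrelevant timer can stop at clock values the other model never visits.

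Concretely, let $\mathcal{M}_1$ have a single rebec whose constructor sends \texttt{obs} and whose handler \texttt{obs} resends itself \texttt{after(10)}. Let $\mathcal{M}_2$ add an independent rebec whose constructor sends \texttt{tick} and whose handler \texttt{tick} resends itself \texttt{after(3)}. No node of the new class, its instantiation or its handler reaches the entry of \texttt{obs}, so the slices for $O=\{\texttt{obs}\}$ coincide. Yet after the first \texttt{obs}, $\mathcal{M}_2$ can let exactly $3$ time units pass, while every accumulated delay of $\mathcal{M}_1$ is a multiple of $10$. No relation can match this. With your requirement $\now(s_1)=\now(s_2)$ the intermediate state has no partner, and with an arbitrary offset the duration $3$ still cannot be produced. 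The paper's own \texttt{Notifyer} revision, with \texttt{send\_signal() after(3)} against a period of $10$, is an instance of the same situation.

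The paper's proof does not supply the missing idea. Its time case only treats delays that end at the arrival of a relevant message, which is exactly the sub-case your plan covers. So the statement cannot be proved with \textsc{TimeProg} and Definition~\ref{def:twt-equivalence} as given, and the setting has to change. One option is to make delays additive, letting the clock advance by any $d$ up to the minimal arrival. Then your $\delta=0$ relation, plus your step of first exhausting enabled irrelevant messages (finite by Zeno-freeness), should close the case. Another option is to restrict delay challenges to those ending at a relevant arrival, which is what the paper's argument actually establishes.

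Your second obstacle is also genuine rather than ``trivial''. Theorem~\ref{The::StrongEquvalence} concerns programs whose whole PDGs are isomorphic, so for bodies that only partially match you need a slicing theorem. In addition, $E_I$ must be read as linking a definition to every upward-exposed use in another handler, not only the first use. Otherwise an out-of-slice handler can write a relevant variable, and Proposition~\ref{Pro::stutter} fails.
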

\begin{proof}
The assumption $\Slice(\RDG(\mathcal{M}_1),O)\equiv \Slice(\RDG(\mathcal{M}_2),O)$ 
implies that $
RM=RM(\mathcal{M}_1,O)=RM(\mathcal{M}_2,O)$ and $RV=RV(\mathcal{M}_1,O)=RV(\mathcal{M}_2,O)$. We first prove that $\mathcal{A}_{\it RM}(\BCTTS(\mathcal{M}_1))\simeq_{wtb} \mathcal{A}_{\it RM}(\BCTTS(\mathcal{M}_2))$, where $O\subseteq RM$. Then, the proof of the theorem immediately results from Proposition \ref{Pro::abstraction}. We construct the relation $\mathcal{R}$ and show that it satisfies the transfer conditions of weak timed bisimulation given in Def. \ref{def:twt-equivalence}:
\[
\mathcal{R}=\{(s,u)\,\mid\, s\simeq_{RM}^{RV}u \}.
\]

For an arbitrary pair $s\mathcal{R} u$, two cases can be considered:\begin{itemize}
\item $s\xRightarrow{m}s'$: there exists $s_1$ and $s_2$ such that $s\Rightarrow s_1 \overto{m}s_2 \Rightarrow s$. 
By construction of $\mathcal{R}$ and Proposition \ref{Pro::stutter}, it holds that $s_1 \mathcal{R} u$. The transition $s_1\overto{m}s_2$ has derived by application of $\textsc{TakeMessage}$. Thus, there exists a rebec $x$ such that $\map(s_1)(x)=(\valuation, \bag,\epsilon)$ and $(y,x,m,par,\now(s_1))\in \bag$. Assume that $\map(u)(x)=(\valuation',\bag',\epsilon)$. As $s_1\simeq_{RM}^{RV}u $, we conclude that there exists $\delta\in\int$ such that $\now(s_1)=\now(u)+\delta$, $\valuation \downharpoonright RV =\valuation '\downharpoonright RV$, and $(y,x,m,par,\now(s_1)+\delta)\in\bag'\downharpoonright RM$. As $m\in RM$, it holds that $(y,x,m,par,\now(u))\in\bag'$ and the rule $\textsc{TakeMessage}$ can be applied, leading to $u\overto{m}u'$. We should prove that \[
\begin{array}{l}
((\valuation,st,\epsilon),\now(s_1)\aoverto{}^*(\valuation_1,\epsilon,out_1),\now(s_1) \wedge
(\valuation',st',\epsilon),\now(u)\aoverto{}^*(\valuation_2,\epsilon,out_2),\now(u) )\Rightarrow \\
\hspace*{2cm}(\valuation_1 \downharpoonright RV = \valuation_2  \downharpoonright RV \,\wedge \, \forall (x,z,m',p,a')\in out_1 \downharpoonright RM \Leftrightarrow (x,z,m',p,a'+\delta)\in out_2 \downharpoonright RM).
\end{array}\]where $st=\body(x,m)$ with respect to $\mathcal{M}_1$ and $st'=\body(x,m)$ with respect to $\mathcal{M}_2$. This trivially is resulted from the equality of the observable slices and Theorem \ref{The::StrongEquvalence}. As the bag of all rebecs are updated with same relevant messages and the local variables of rebec $x$ with respect to $RV$ has been updated similarly, we conclude that $s_2 \simeq_{RM}^{RV} u'$. The application of Proposition \ref{Pro::stutter} implies that $s'\mathcal{R} u'$.
\item $s\toverto{t}s'$ where there exists at least one relevant message $(y,x,m,par,\arv)$ in the bag of rebec $x$, where $m\in RM$ and $\arv-\now(s)=t$, that can be triggered in $s'$, i.e., $\now(s')=\arv$: there exists $s_1,\ldots,s_{n-1}$ such that $s\toverto{t_1} s_1 \toverto{t_2} s_2 \ldots s_{n-1}\toverto{t_n} s$, where $t=t_1+\ldots+t_n$. It is obvious that $(y,x,m,par,\arv)$ is contained in the bag of rebec $x$ in the state $s$. If it is not contained, the processing of the irrelevant messages over the path $s\toverto{t}s'$ would have triggered $m$ which is in contradiction with definition of relevant messages. The assumption $s\simeq_{RM}^{RV}u$ implies that $(y,x,m,par,\arv+\delta)$ is also contained in the bag of rebec $x$, where $\now(u)=\now(s)+\delta$ and $\arv+\delta-\now(u)=t$. Therefore, there exists $u_1,\ldots,u_{m-1}$ such that $u\toverto{t_1'} u_1 \toverto{t_2'} u_2 \ldots s_{n-1}\toverto{t_m'} u'$, where $t=t_1'+\ldots+t_n'$, and $\now(u')=\now(u)+t$. Thus, $u\toverto{t}u'$ and $s\simeq_{RM}^{RV}u$. 
\end{itemize}
The same discussion can be managed when $u\xRightarrow{m}u'$ or $u\toverto{t}u'$. Thus, $\mathcal{R}$ is a weak timed bisimulation. 
\end{proof}

\subsection{Discussion}
The RDG-based comparison supports the following kinds of model revisions:

\begin{itemize}[
    leftmargin=1.5em,
    itemsep=1pt,
    topsep=2pt,
    parsep=0pt,
    partopsep=0pt
]
    \item Adding or removing state variables whose nodes do not belong to the
    observable RDG slice. Such variables do not affect the observable
    behavior.

    \item Adding or removing non-observable messages whose message-server
    entries have no directed dependency path to an observable message-server
    entry. A non-observable message that may affect an observable message is
    relevant and must therefore be preserved.

    \item Adding or removing a rebec whose class, state variables, message
    servers, and dependencies are entirely outside the observable RDG slice.

    \item Adding or removing message parameters when the corresponding actual-in parameter nodes are outside the observable RDG
    slice. 

    \item Adding or removing statements to a message body when the corresponding statement nodes are outside the observable RDG slice.
\end{itemize}

\section{Case study on Smart Home}\label{sec:case}
Figure~\ref{fig:heater-models} presents two Timed Rebeca models of a smart home consisting of four kinds of rebecs: controller, heating unit, room, and sensor. The original model $M$ contains the
behaviour relevant to the heater-control loop: the room updates its temperature,
the sensor sends the measured value to the controller, and the controller
activates or switches off the heater according to the received temperature.
The revised model $M'$ preserves this behaviour but adds a \emph{Notifyer}
rebec and additional calls to \emph{notify}. We choose the observable set $O$
to contain only the messages that belong to the heater-control behaviour.
Therefore, the notification part is outside the selected observable slice.

\begin{figure}
    \centering
    \setlength{\tabcolsep}{5pt}

    \begin{tabular}{@{}p{0.49\textwidth}@{\hfill}p{0.49\textwidth}@{}}

        \begin{minipage}[t]{\linewidth}
            \vspace{0pt}

\begin{lstlisting}[
                language=Hrebeca,
                numbers=left,
                frame=lines,
                framerule=0.3pt,
                xleftmargin=2.2em,
                numbersep=6pt,
                basicstyle={\fontsize{6.4}{7.9}\selectfont\ttfamily},
                numberstyle={\fontsize{5.0}{6.0}\selectfont},
                columns=fullflexible,
                keepspaces=true,
                breaklines=true,
                breakatwhitespace=true,
                showstringspaces=false,
                aboveskip=0pt,
                belowskip=0pt
            ]
reactiveclass Controller(5) {
	knownrebecs{ HC_Unit hc_unit; }
	statevars{ boolean heating_active; int sensedValue; }
	Controller(){
		heating_active = false;
		sensedValue = 20; }
	msgsrv getSense(int temp) {
		sensedValue = temp;
		if (21 > temp && heating_active == false) {
			hc_unit.activateh(); //heat
			heating_active = true;
		} else if (21 <= temp && heating_active == true) {
			hc_unit.switchoff();
			heating_active = false;
		} }
}

reactiveclass Room(5) {
	knownrebecs{ Sensor sensor; }
	statevars{ int temperature, outside_air_blowing, regulation; }
	Room(){
		temperature = 21;
		regulation = 0;
		outside_air_blowing = 0;
		self.tempchange(); }
	msgsrv tempchange() {
		outside_air_blowing = ?(1,0);
		temperature = temperature - outside_air_blowing + regulation;
		sensor.getTemp(temperature);
		self.tempchange() after(10); }
	msgsrv regulate(int v) {
		regulation = v;
	}
}

reactiveclass Sensor(5) {
	knownrebecs{ Room room; Controller controller; }
	msgsrv getTemp(int temp) {
		controller.getSense(temp);
	}
}

reactiveclass HC_Unit(5) {
	knownrebecs{ Room room; }
	statevars{ boolean heater_on; }
	HC_Unit(){
		heater_on = false;
	}
	msgsrv activateh() {
		room.regulate(1);
		heater_on = true; }
	msgsrv switchoff() {
		room.regulate(0);
		heater_on = false; }
}

main{
	Room room(sensor):();
	Controller controller(hc_unit):();
	Sensor sensor(room,controller):();
	HC_Unit hc_unit(room):();
}
\end{lstlisting}

            \vspace{0.35em}

            \centering\small
            (a) Original heater model

        \end{minipage}

        &

        \begin{minipage}[t]{\linewidth}
            \vspace{0pt}

\begin{lstlisting}[
                language=Hrebeca,
                numbers=left,
                frame=lines,
                framerule=0.3pt,
                xleftmargin=2.2em,
                numbersep=6pt,
                basicstyle={\fontsize{6.4}{7.9}\selectfont\ttfamily},
                numberstyle={\fontsize{5.0}{6.0}\selectfont},
                columns=fullflexible,
                keepspaces=true,
                breaklines=true,
                breakatwhitespace=true,
                showstringspaces=false,
                aboveskip=0pt,
                belowskip=0pt
            ]
reactiveclass Controller(5) {
	knownrebecs{ HC_Unit hc_unit; Notifyer notifyer; }
	statevars{ boolean heating_active; int sensedValue; }
	Controller(){
		heating_active = false;
		sensedValue = 20; }
	msgsrv getSense(int temp) {
		sensedValue = temp;
		if (21 > temp && heating_active == false) {
			hc_unit.activateh(); //heat
			heating_active = true;
			notifyer.notify(temp);
		} else if (21 <= temp && heating_active == true) {
			hc_unit.switchoff();
			heating_active = false;
			notifyer.notify(temp); } }
}

reactiveclass Room(5) {
	knownrebecs{ Sensor sensor; }
	statevars{ int temperature, outside_air_blowing, regulation; }
	Room(){
		temperature = 21;
		regulation = 0;
		outside_air_blowing = 0;
		self.tempchange(); }
	msgsrv tempchange() {
		outside_air_blowing = ?(1,0);
		temperature = temperature - outside_air_blowing + regulation;
		sensor.getTemp(temperature);
		self.tempchange() after(10); }
	msgsrv regulate(int v) { 
		regulation = v; 
	}
}

reactiveclass Sensor(5) {
	knownrebecs{ Room room; Controller controller; }
	msgsrv getTemp(int temp) { 
		controller.getSense(temp); 
	}
}

reactiveclass HC_Unit(5) {
	knownrebecs{ Room room; }
	statevars{ boolean heater_on; }
	HC_Unit(){ 
		heater_on = false; 
	}
	msgsrv activateh() {
		room.regulate(1);
		heater_on = true; }
	msgsrv switchoff() {
		room.regulate(0);
		heater_on = false; }
}

reactiveclass Notifyer(5) {
	statevars{ int current_status; }
	Notifyer(){ }
	msgsrv notify(int current_status_) {
		current_status = current_status_;
		send_signal() after(3); }
	msgsrv send_signal(){ }
}

main{
	Room room(sensor):();
	Controller controller(hc_unit,notifyer):();
	Sensor sensor(room,controller):();
	HC_Unit hc_unit(room):();
	Notifyer notifyer():();
}
\end{lstlisting}

            \vspace{0.35em}

            \centering\small
            (b) Heater model extended with a \texttt{Notifyer}

        \end{minipage}

    \end{tabular}

    \caption{
        Two heater-controller models used in the RDG-based comparison:
        (a) the original model $M$, and
        (b) the extended model $M'$, which adds a \texttt{Notifyer} rebec
        while preserving the heater-control behaviour.
    }

    \label{fig:heater-models}
\end{figure}

Figure~
\ref{fig:heater-rdg-extended} show the complete RDGs of the model $M'$
together with the observable slices induced by $O$ (blue nodes). Since the added
\texttt{Notifyer} behaviour (yellow nodes) does not affect the selected observable messages, its nodes remain outside the observable slice of $M'$. 






\begin{figure}[H]
    \centering

    \includegraphics[
        width=0.92\textwidth,
        keepaspectratio
    ]{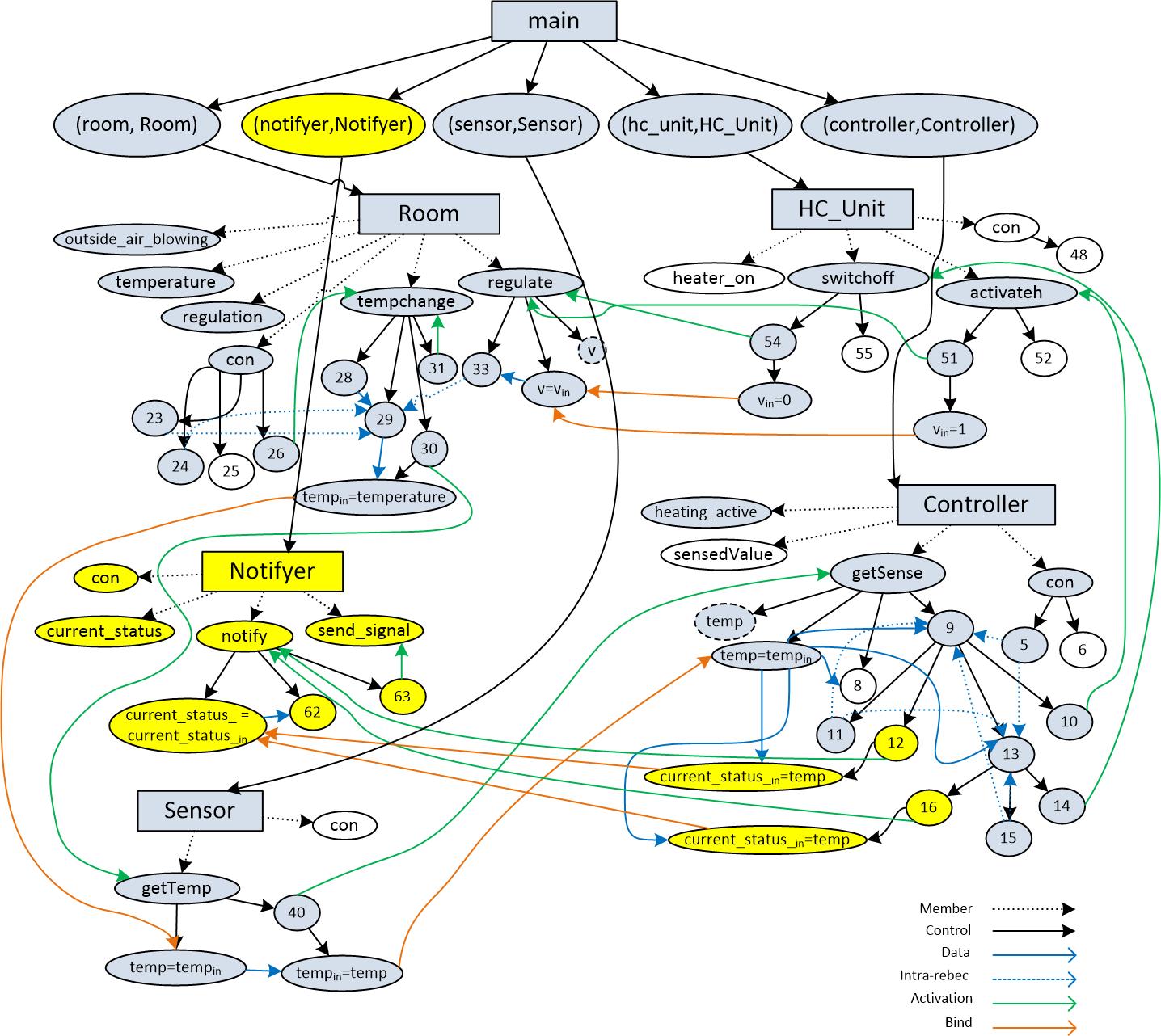}

    \captionsetup{
        format=plain,
        justification=justified,
        singlelinecheck=false
    }

    \caption{
    RDG of the revised heater model $M'$. Blue nodes indicate the observable
    slice induced by $O$, white nodes lie outside the slices of both models,
    and yellow nodes were added by the revision. 
    }

    \label{fig:heater-rdg-extended}
\end{figure}

The nodes belonging to the \texttt{Notifyer} reactive class do not have a
directed dependency path to an entry node of a message server in $O$.
Therefore, these nodes cannot affect an observable message and do not belong
to the slice induced by $O$.

\section{Experiments}\label{sec:experiments}

We compare the cost of checking observable RDG slices with the cost of
state-space exploration and tiny-twin reduction on ten models. We address the
following research questions:
\begin{description}[style=unboxed,leftmargin=0pt]
\item[RQ1 (Cost).] \emph{How do the execution time and peak memory of the
static comparison differ from those of tiny-twin generation by state-space
exploration and reduction, and how do these costs relate to the sizes of the
source models and of their state spaces?} Reusing an existing tiny twin is
worthwhile only if checking a revision costs substantially less than
regenerating the twin. Section~\ref{subsec:eval-results} answers RQ1 using the
measurements in Table~\ref{tab:eval-comparison}.
\item[RQ2 (Slice size).] \emph{What fraction of each model's RDG does its
observable slice retain, and which dependency structures keep this fraction
small?} Revisions confined to the part of a model outside its observable slice
do not affect the comparison, so a smaller slice leaves more of the model free
to evolve while reuse can still be established.
Table~\ref{tab:eval-comparison} reports these fractions, and
Section~\ref{subsec:eval-slice} relates them to the dependency structures of
the models.
\item[RQ3 (Change attribution).] \emph{When a revision changes a message server
shared by several message-dependency loops, to what extent do the static
verdicts obtained with each loop's observable messages agree with changes in
that loop's regenerated tiny twin?} Soundness guarantees reuse only for
accepted comparisons; a rejected comparison carries no such guarantee, and a
component diagram shows which loops exist but not which of them a revision
affects. Section~\ref{sec:discussion} answers RQ3 by comparing the verdicts for
three bounded-buffer revisions with their regenerated twins.
\end{description}

\subsection{Cost of the two computations}\label{subsec:eval-costs}

For a semantic model $\BCTTS(\mathcal{M})=(S,\rightarrow,s_0)$, explicit
state-space exploration visits the reachable states and transitions. Traversing
this graph takes $\Theta(|S|+|\rightarrow|)$ time, excluding the cost of computing
and storing each successor. Reduction also operates on the generated structure.
The number of states can grow combinatorially with message orders and
interleavings, independently of the number of source statements.

Slice-equivalence checking works on code. For an RDG $G=(N,E)$, the backward
traversal that computes $\Slice(G,O)$ takes $\mathcal{O}(|N|+|E|)$ time.
The full decision also includes parsing, RDG construction, and checking slice
isomorphism. Its cost depends on the size and structure of graphs representing
source statements and their dependencies.

\subsection{Tools}\label{subsec:eval-tools}

\emph{AdaptChecker}~\cite{AdaptChecker} takes an original and a revised Timed
Rebeca model and a set $O$ of observable messages. It constructs both RDGs and
computes their observable slices by following data, control, and message
dependencies backward from the message servers in $O$. It then checks for a
node bijection preserving node kinds, contents, and typed dependencies.
The comparison works directly on the source graphs, without generating either
model's state space. The tool produces a verdict, a report of the comparison,
and graph visualizations showing the observable slices.

\emph{TiniActor}~\cite{TiniActor} takes a generated state space in Afra's
output format and the observable set $O$. It casts the state space to an LTS
and replaces message actions outside $O$ by $\tau$. Its pipeline applies a
weak reduction, accumulates time
(Definitions~\ref{def:twt-equivalence} and~\ref{Def::weaktrace}), and reduces
again to obtain the tiny twin. It supports weak bisimulation and weak trace
reduction; the reported twins use weak trace reduction through
\texttt{ltsconvert} in mCRL2 202607.0~\cite{MCRL2Tools}. The output consists of
the reduced LTS and its graph representation for inspection and comparison.

\subsection{Benchmark models and measurement protocol}\label{subsec:eval-models}

Models 1--9 abstract control systems; model~10 is a concurrency benchmark.
Table~\ref{tab:eval-models} identifies their sources and observable-set sizes.
Models 1--6 are hand-written architectural abstractions inspired by the cited
systems. Models 7--10 adapt published Lingua Franca examples, preserving selected
message flows while replacing numerical computations by finite data domains.
Plant feedback and some request--response interactions are abstracted away.
The models we check have no \texttt{delay} statements or local variables.
The models, observable sets, and source material are available in the
artifact~\cite{TimedActorBenchmarks}.

\begin{table}[H]
\centering
\caption{Benchmark provenance. $|O|$ counts instance-qualified message names;
model~10 has separate producer ($O_P$) and consumer ($O_C$) criteria.}
\label{tab:eval-models}
\small
\begin{tabular*}{\textwidth}{@{\extracolsep{\fill}}cllr@{}}
\hline
\# & System & Architectural source or adapted example & $|O|$ \\
\hline
1 & Wind turbine controller & ROSCO~\cite{ROSCOSource} & 3 \\
2 & Automotive body control & LIN~\cite{LINStandard} & 2 \\
3 & CubeSat flight software & NASA F Prime~\cite{FPrimeSource} & 2 \\
4 & Digital substation & IEC 61850/libIEC61850~\cite{IEC61850Source} & 2 \\
5 & Integrated modular avionics & ARINC 653/XtratuM~\cite{XtratuMSource} & 2 \\
6 & Modular battery pack & foxBMS~\cite{FoxBMSSource} & 3 \\
7 & Longitudinal flight control & ROSACE in LF~\cite{PagettiEtAl2014ROSACE,LFROSACESource} & 3 \\
8 & Adaptive robot arm control & MuJoCo robot arm in LF~\cite{LFRobotArmSource} & 3 \\
9 & Autonomous valet parking & Autoware pipeline in LF~\cite{LFAVPSource} & 4 \\
10 & Bounded buffer & Savina bounded buffer in LF~\cite{LFSavinaSource} & 3 each \\
\hline
\end{tabular*}
\end{table}

The experiments use a shared server with two AMD EPYC 9554 processors. Our
environment exposes 112 cores and 224 hardware threads, 1{,}511\,GiB of RAM,
and no swap. State-space exploration and reduction were run once per original
model. Wall-clock times are indicative because the machine is shared.

For each static cost measurement, we increase one integer literal by seven in
a statement outside the observable slice and verify that comparison accepts
the revision. We report the median and interquartile range (IQR) of 20 fresh
Python 3.12.3 processes per pair. The decision timer covers observable-file
loading, parsing both models, building both RDGs, slicing, and comparison;
it excludes interpreter startup, imports, and report or graph-file output.
Memory is the maximum process peak resident set size across these runs.
For both tiny-twin generation and static comparison, time and memory use can
vary with the changes made to the model.

\subsection{Results}\label{subsec:eval-results}

Table~\ref{tab:eval-comparison} places the source sizes and both computations
side by side. Dynamic time is the sum of exploration and reduction times;
dynamic memory is the larger peak of those two stages. The tiny-twin counts
come from the stored LTS files. Source lines exclude comments and blank lines.

\begin{table}[H]
\centering
\caption{Source and semantic sizes, with tiny-twin generation and static
comparison costs. Slash pairs give counts in the order shown; K=$10^3$,
M=$10^6$. Dynamic memory is in GiB, static memory in MiB. A dash denotes an
unrecorded measurement.}
\label{tab:eval-comparison}
\label{tab:eval-dynamic}
\label{tab:eval-static}
\footnotesize
\setlength{\tabcolsep}{3pt}
\begin{tabular*}{\textwidth}{@{\extracolsep{\fill}}crrrrrrrrr@{}}
\hline
 & \multicolumn{3}{c}{Source and dependence graphs} &
 \multicolumn{4}{c}{Exploration + TiniActor} & \multicolumn{2}{c}{AdaptChecker} \\
\cline{2-4}\cline{5-8}\cline{9-10}
\# & Classes/ & RDG & Slice nodes & BCTTS & Twin & Total & Peak & Decision & Peak \\
 & lines & nodes/edges & (\% RDG) & states/trans. & states/trans. & time & GiB & ms (IQR) & MiB \\
\hline
1 & 17/469 & 327/548 & 61 (18.7) & 127K/326K & 13/16 & 37.1s & 7.4 & 18.2 (0.4) & 21.0 \\
2 & 27/637 & 417/721 & 43 (10.3) & 48.6K/85.1K & 12/15 & 17.9s & 3.6 & 20.9 (0.2) & 21.0 \\
3 & 32/650 & 430/694 & 48 (11.2) & 111K/301K & 16/24 & 54.3s & 10.0 & 21.6 (0.2) & 21.0 \\
4 & 35/704 & 473/712 & 50 (10.6) & 78.9K/138K & 10/14 & 41.6s & 7.9 & 28.1 (0.3) & 21.0 \\
5 & 34/732 & 515/811 & 28 (5.4) & 74.6K/105K & 12/18 & 36.8s & 7.5 & 23.4 (0.1) & 21.0 \\
6 & 10/214 & 148/231 & 55 (37.2) & 11.0M/55.3M & 29/41 & 79m35s & 538 & 10.1 (0.1) & 21.0 \\
7 & 19/388 & 277/433 & 60 (21.7) & 1.81M/3.00M & 89/141 & 10m23s & 110.3 & 15.3 (0.2) & 21.0 \\
8 & 12/291 & 221/363 & 118 (53.4) & 734K/2.40M & 30/62 & 3m27s & 32.8 & 27.6 (0.3) & 21.0 \\
9 & 20/395 & 282/456 & 187 (66.3) & 108K/307K & 43/61 & 37.9s & 6.78 & 42.2 (0.5) & 22.8 \\
$10_P$ & 4/112 & 87/150 & 44 (50.6) & 603/940 & 23/32 & 0.18s & 0.029 & 7.4 (0.1) & 21.0 \\
$10_C$ & 4/112 & 87/150 & 59 (67.8) & 603/940 & 20/23 & -- & -- & 8.5 (0.1) & 21.0 \\
\hline
\end{tabular*}
\end{table}

All static comparisons take less than 44\,ms and use less than 23\,MiB.
The nearly constant peak-memory values indicate that the baseline footprint of
the Python process dominates these measurements, so the incremental dependence
on RDG size is not visible at the resolution of process-level peak-RSS
measurements. The largest gap occurs for model~6, whose tiny-twin generation takes over an
hour and hundreds of gigabytes despite its small source. Its many reachable
states make exploration and reduction expensive, while the static check
operates on a small RDG.

\subsection{Models suited to slice-equivalence checking}
\label{subsec:eval-slice}

The method is most useful when $\Slice(G,O)$ contains a small part of the RDG.
This is favored by a wide fan-out before the observed branch and little feedback
from the end of the chain to its start. When the observable messages belong to
one branch, backward slicing removes the other branches and components that
cannot affect them~\cite{SabouriSirjani2010Actor}. Most of our samples follow
these structural patterns.

When the observable slice contains most RDG nodes, more source changes can
affect the slice, reducing the opportunities to establish reuse. A rejected
comparison cannot guarantee reuse, so the tiny twin must be regenerated or
checked behaviorally.

\section{Discussion}\label{sec:discussion}

\subsection{Distinguishing causality loops under evolution}
\label{subsec:eval-causality}

The bounded buffer contains two producers and two consumers. One message loop
connects grants, production, and deposits; the other connects consumer
readiness and delivery. These are cycles of asynchronous message dependencies,
not instantaneous causality cycles in Lingua Franca's scheduling semantics.
Figure~\ref{fig:eval-causality-diagram} shows both connections.

\begin{figure}[H]
\centering
\includegraphics[width=0.6\textwidth]{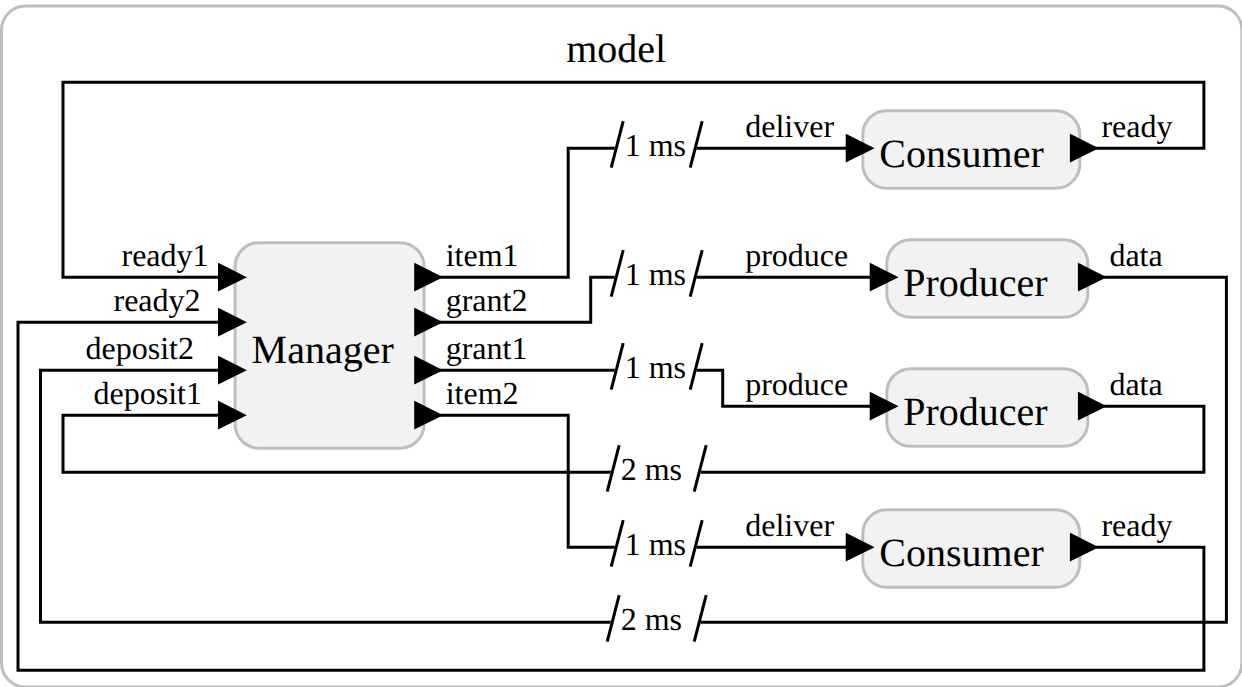}
\caption{Lingua Franca component diagram of the last benchmark sample. The
diagram exposes the producer and consumer message-dependency loops, but does not show
which loop can be affected by a code change.}
\label{fig:eval-causality-diagram}
\end{figure}

The two observable sets are
\[
\begin{aligned}
O_P &= \{\texttt{p1.produce},\texttt{p2.produce},\texttt{manager.deposit}\},\\
O_C &= \{\texttt{c1.deliver},\texttt{c2.deliver},\texttt{manager.consumerready}\}.
\end{aligned}
\]
Revision $M'_A$ changes the value delivered to consumers; $M'_B$ changes the
number of outstanding producer grants. Both retain the components and
connections. We also check a neutral revision $M'_N$ that replaces the argument
\texttt{last} by \texttt{last + 0} in a consumer delivery.
Table~\ref{tab:eval-causality} reports the static verdicts and whether the
derived twins change, including changes to observed message values.

\begin{table}[H]
\centering
\caption{Static verdicts and regenerated-twin comparisons for bounded-buffer
revisions. A rejected static comparison is inconclusive.}
\label{tab:eval-causality}
\small
\begin{tabular*}{\textwidth}{@{\extracolsep{\fill}}llcc@{}}
\hline
Revision & Observations & RDG slices & Derived twins \\
\hline
$M'_A$, consumer value & $O_P$ & equivalent & unchanged \\
$M'_A$, consumer value & $O_C$ & not equivalent & changed \\
$M'_B$, producer grants & $O_P$ & not equivalent & changed \\
$M'_B$, producer grants & $O_C$ & not equivalent & changed \\
$M'_N$, neutral expression & $O_C$ & not equivalent & unchanged \\
\hline
\end{tabular*}
\end{table}

For $M'_A$, slice equivalence establishes that the producer loop is unaffected,
whereas the consumer-loop comparison is rejected and its twin changes. For
$M'_B$, both comparisons are rejected and both twins change because producer
behavior can also influence later deliveries. The neutral revision $M'_N$ is
rejected despite its unchanged twin, showing that rejection is inconclusive.
Thus, the component diagram shows which causality loops exist, whereas the
RDG comparison determines from the code which loops are unaffected and which
may be affected by a particular evolution. This change-specific attribution
goes beyond analyses that stop at loop detection.

\section{Related Work}\label{sec:related}

Our static comparison uses syntactic equivalence of observable RDG slices as
a sufficient condition for weak timed bisimulation. A rejected comparison
indicates a possible effect, not necessarily a behavioral change.
Program dependence graphs support semantic comparison through the data and
control dependencies of source code~\cite{FerranteOW87}. Horwitz et
al.~\cite{HorwitzPR88a} show that isomorphic PDGs imply strong equivalence for
the sequential programs they consider. Ito et al.~\cite{ItoHagiharaYonezaki2007}
give an operational semantics for PDGs of unstructured programs and establish
its agreement with sequential execution. These results provide the foundation
for our comparison of message-server bodies. These sequential results do not
cover asynchronous message scheduling. Timed Rebeca additionally requires
preserving dependencies across message servers and the times at which messages
become available.
Other graph-based comparisons identify similar code. Krinke~\cite{Krinke2001}
detects similar subgraphs in PDGs, while Liu et al.~\cite{LiuZengWangLiang2023}
learn representations of code property graphs for functional similarity
detection. These methods support clone detection and code classification.
GumTree~\cite{FalleriEtAl2014} identifies source changes through edit scripts
between abstract syntax trees (ASTs), including node moves. Martinez et
al.~\cite{MartinezFalleriMonperrus2023} use hyperparameter optimization to
improve these scripts. Neither detected similarity nor an AST edit script
by itself establishes the behavioral equivalence required for reusing a tiny
twin.

Relational verification compares two programs through assertions relating
their executions. Benton's relational Hoare logic~\cite{Benton2004} supports
correctness proofs for static analyses and program transformations, including
slicing. Beckert et al.~\cite{BeckertEtAl2019} use relational verification to
check whether removing statements preserves the values selected by a slicing
criterion. Such methods can establish equivalence and can justify changes
beyond the syntactic matching used here. Their results concern the specified
relations between program states. Our criterion instead selects observable
message names, and our soundness result relates the induced timed transition
systems. Both approaches can reason from code without explicitly constructing
the complete state space.

Dependence analysis for concurrent and reactive languages must reflect their
communication semantics. Qi and Xu~\cite{QiXu2004} construct dependence graphs
from task-synchronization reachability graphs to analyze concurrent Ada
programs. Shankar and Fujita~\cite{ShankarFujita2008} derive equivalence rules
from the semantics of SpecC and apply them using dependence and flow graphs.
Their work is a direct precedent for establishing reactive-program
equivalence through a static graph representation.
Within the actor setting, Sabouri and Sirjani~\cite{SabouriSirjani2010Actor}
introduce RDGs for property-directed reduction of Rebeca models, including
static, step-wise, and bounded slicing. Sabouri and
Khosravi~\cite{SabouriKhosravi2012ProductLines} use slicing and static property
analysis to reduce the number of product configurations requiring model
checking. These approaches remove behavior irrelevant to a verification
task. We adapt the RDG to compare observable slices of two Timed Rebeca model
versions and establish a guarantee for their timed behavior. Actor-program
equivalence is also studied by Bereczky et al.~\cite{BereczkyHorpacsiThompson2024},
who formalize Core Erlang and use barbed bisimulation to relate programs with
different communication structures. Their result concerns an untimed actor
semantics and does not supply a criterion based on RDG-slice equivalence.

For tiny-twin reuse, detecting a causality cycle does not determine whether
a revision affects a selected observable loop.
Lingua Franca uses declared reaction dependencies to order execution and
detect causality problems~\cite{LohstrohEtAl2021}. Its coordination analysis
treats reaction bodies written in a target language as black boxes. The
feedback loops considered in our experiments can contain delayed messages;
they need not be instantaneous cycles that prevent a valid reaction order.
To determine whether a code revision can affect a selected loop, we follow
data, control, and message dependencies within and across message servers.
This is a static form of influence tracking, related to dependence-based
information-flow analysis~\cite{HammerSnelting2009} and semantic
differencing~\cite{Horwitz1990}. Choosing the
messages of each loop as its observable set allows unaffected loops to be
identified even when several loops share a component.

Abstraction offers another way to reduce verification cost. Counterexample-guided
abstraction refinement checks a smaller abstract model and refines it when
counterexamples are spurious~\cite{ClarkeEtAl2000}. Janowska and
Janowski~\cite{sliced} apply property-directed slicing to timed automata with
discrete data. Thieme et al.~\cite{DBLP:conf/zum/ThiemeSSPU26}
synthesize model slices that preserve cross-model consistency relationships,
using counterexample-guided inductive synthesis with Alloy. Their criterion
is equiconsistency with other models, whereas ours is preservation of
observable timed behavior under model evolution.

Tiny twins provide the immediate application of our comparison. Moradi et
al.~\cite{DBLP:journals/jpdc/MoradiPAS24} construct a tiny twin by generating
a Timed Rebeca state space, hiding non-observable actions, and reducing the
result under trace equivalence. The compact model is then used for runtime
attack detection. Our analysis addresses whether an existing tiny twin remains
valid after its source model changes. A successful comparison permits reuse
without generating and reducing the revised state space; it does not construct
the initial tiny twin.
Preserving elapsed time is essential to this reuse criterion. Weak timed
equivalences already accumulate durations across internal transitions, as
formalized by Brengos and Peressotti~\cite{BrengosPeressotti2019}. Building on
this notion, we connect a static comparison of observable RDG slices to weak timed
bisimulation, which preserves total elapsed time while hiding internal
actions. Weak timed bisimulation also implies weak timed trace equivalence
under the same message-name observations.
The result applies to the Zeno-free fragment with \texttt{after}
annotations and no \texttt{delay} statements.

\section{Conclusion}\label{sec:conclusion}
In this work, we have investigated the intersection of two classically separate views of programs: Reducing programs through slicing and program evolution. Our static analysis enables to characterize when slices are preserved under evolution of the program and evolution of the slicing context, i.e., the observables. 

The investigation is motivated by Digital Twins, whose longevity requires connecting program and model evolution with all operations performed on the model. We, thus, conjecture that static analysis can simplify Digital Twins reconfiguration in a broader setting. A core notion in this setting is \emph{consistency}~\cite{DBLP:conf/isola/PascualBUKP24,DBLP:journals/jss/MuctadirKCB26,DBLP:journals/sosym/KamburjanBJT26}: are the deployed models in a digital twin consistent with each other's requirements on the context. Our work can be understood as an analysis that establishes whether a changed context (observable or source model) affects the consistency of the deployed slice.

Static analyses can contribute to the investigation of Digital Twin evolution~\cite{DBLP:conf/models/AlskaifBBCCDHMNSBV25,DBLP:journals/sosym/MertensKBDH25,DBLP:journals/fgcs/AbbasiPS26} in two ways. First, as in this work, by enabling a precise notion of context changes and equivalences that make reconfiguration more efficient. Second, by providing a notion of context in the first place. In this work, we use observables (and the source model) as context of a slice, and a slice is consistent if its current context and the context it was generated under are equivalent for the slicing operation. Thus, a Digital Twin system managing consistency must make this context explicit for management.

\paragraph{Future Work}
Beyond integration with Digital Twin frameworks, we plan to investigate the evolution of slices under non-equivalence. Such methods would enable us to evolve the slice instead (re-)generating it and reuse information from prior extractions.

\section*{Acknowledgments}
This work is partially supported by the DFF project \textit{Graph-based Verification of Reflective Programs} (5254-00016B).








\bibliographystyle{elsarticle-num}
\bibliography{ref}





\end{document}